\documentclass[a4paper,12pt]{article} 
\usepackage{amsmath}
\usepackage{amssymb,amsfonts,amsthm}
\usepackage{gastex}
\usepackage[english]{babel}
\usepackage{graphicx}
\usepackage{cite}
\usepackage[margin=1in]{geometry}            
\tolerance=1000

\theoremstyle{plain}
\newtheorem{thrm}{Theorem}[section]
\newtheorem{lmm}{Lemma}[section]
\newtheorem{rmrk}{Remark}[section]
\newtheorem{bsrvtn}{Observation}[section]


\def\IR{{\sf Gr}}
\def\e{{\sf e}}
\def\u{\mathbf{t}}
\def\v{\bar{\mathbf{t}}}

\begin{document}
\title{Binary Patterns in Binary Cube-Free Words: Avoidability and Growth}
%
\author{Robert Mercas\thanks{Otto-von-Guericke-Universit{\"a}t Magdeburg, Fakult{\"a}t f{\"u}r Informatik, PSF 4120, D-39016 Magdeburg, Germany; Supported by the Alexander von Humboldt Foundation; robertmercas@gmail.com}, Pascal Ochem\thanks{CNRS, LIRMM, France; Pascal.Ochem@lirmm.fr}, 
Alexey V. Samsonov\thanks{Ural Federal University, Ekaterinburg, Russia; vonosmas@gmail.com},\\ and Arseny M. Shur\thanks{Ural Federal University, Ekaterinburg, Russia; Arseny.Shur@usu.ru}}
\date{}
\maketitle

\begin{abstract}
The avoidability of binary patterns by binary cube-free words is investigated and the exact bound between unavoidable and avoidable patterns is found. All avoidable patterns are shown to be D0L-avoidable. For avoidable patterns, the growth rates of the avoiding languages are studied. All such languages, except for the overlap-free language, are proved to have exponential growth. The exact growth rates of languages avoiding minimal avoidable patterns are approximated through computer-assisted upper bounds. Finally, a new example of a pattern-avoiding language of polynomial growth is given.
\end{abstract}
%
%

\section{Introduction}
Factorial languages, i.e., languages closed under taking factors of their words, constitute a wide and important class. Each factorial language can be defined by a set of forbidden (avoided) structures: factors, patterns, powers, Abelian powers, etc. In this paper, we consider languages avoiding sets of patterns. 

Pattern avoidance is one of the classical topics in combinatorics of words. Recall that patterns are words over the auxiliary alphabet of variables. These variables admit arbitrary non-empty words over the main alphabet as values. A word over the main alphabet \emph{meets} the pattern if some factor of this word can be obtained from the pattern by assigning values to the variables, and \emph{avoids} the pattern otherwise. 

The main question concerning the avoidance of any set of forbidden structures is whether the language of all avoiding words over the main alphabet is finite or infinite. The set is called \emph{unavoidable} in the first case and \emph{avoidable} in the second case. We use the terms $k$-\emph{(un)avoidable} to specify the cardinality of the main alphabet. 

If a set of structures is avoidable, then the second question is how big is the avoiding language in terms of growth. In general, a simple constraint usually defines either a finite language or a language of exponential growth. So, the examples of languages having subexponential (e.\,g., polynomial) growth are quite valuable. 

For languages avoiding patterns, the main question is far from being satisfactorily answered even for the case of a single pattern. A complete description of the pairs (alphabet, pattern) such that the pattern is avoidable over the alphabet is known only for patterns with at most three variables \cite{Thue12, GoVa91, Roth92, Cas93a, Cas94dis, Och06} and for the patterns that are not avoidable over any alphabet \cite{BEM79, Zim82}. There are very few papers about avoidable sets of patterns; we only mention a result by Petrov \cite{Pet88}. The only exception is the set $\{\sf xxx,xyxyx\}$, defining the binary \emph{overlap-free} language which is quite well presented in literature starting from the seminal paper by Thue \cite{Thue12}.

There are some scattered results concerning the second question (cf. \cite{BeGo07,Och10}). To the best of our knowledge, the only example of a pair (alphabet, pattern) such that the language avoiding the pattern over the alphabet grows subexponentially with the length, was found in \cite{BMT89}: a 7-ary pattern avoidable over the quaternary alphabet. All infinite languages avoiding a binary pattern grow exponentially (combined \cite{Bra83, GoVa91}). However, the binary overlap-free language has polynomial growth \cite{ReSa85}. 

\smallskip
In this paper we start a systematic study of both questions formulated above for the languages specified by a pair of forbidden patterns. It is quite natural to begin with the binary main alphabet and consider the patterns of two variables also. For the first step, it is also natural to fix one of the patterns to be $\sf xxx$, which is the shortest pattern avoidable over two letters. This step is in line with other studies of binary cube-free words with additional constraints (see, e.g., \cite{BCC11}). In this setting, the aim of this paper is to describe the avoidability of binary patterns by the binary cube-free words and the order of growth of avoiding languages. This description is given by the following theorem. Recall that an avoidable set of structures is called \emph{D0L-avoidable} if it is avoided by an infinite word generated by the iteration of a morphism.

\begin{thrm}[Main theorem] \label{main}
Let $P\in\{{\sf x,y}\}^*$ be a binary pattern.\\
1) The set $\{{\sf xxx},P\}$ of patterns is 2-avoidable if and only if $P$ contains as a factor at least one of the words 
\begin{equation} \label{mainlist}
\sf xyxyx,xxyxxy,xxyxyy,xxyyxx,xxyyxyx,xyxxyxy,xyxxyyxy,
\end{equation}
considered up to negation and reversal.\\
2) All 2-avoidable sets $\{{\sf xxx},P\}$ are 2-D0L-avoidable.\\
3) For all 2-avoidable sets $\{{\sf xxx},P\}$, except for the set $\{{\sf xxx,xyxyx}\}$, the avoiding binary language has exponential growth.
\end{thrm}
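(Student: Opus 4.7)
The plan splits along the three parts of the statement. For Part~1, the ``if'' direction and Part~2 are most naturally handled together: for each of the seven canonical minimal patterns $P$ listed in~\eqref{mainlist} (considered up to negation and reversal), I would exhibit a uniform binary morphism $h_P$ whose fixed point $h_P^\omega(0)$ is cube-free and avoids $P$. Verifying that a fixed point of a $k$-uniform morphism is cube-free and $P$-avoiding is a standard finite computation: it reduces to checking that no cube or no occurrence of $P$ appears in $h_P^n(0)$ for some bounded $n$, via the classical lemma that for uniform morphisms any long enough forbidden factor of the fixed point forces a shorter forbidden factor in some $h_P^m(0)$. Exhibiting these seven morphisms simultaneously yields Part~2 and the ``if'' direction of Part~1, since any $P$ strictly containing one of the listed patterns as a factor is 2-D0L-avoided by the corresponding word.

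For the ``only if'' direction of Part~1 I would carry out a finite case analysis: enumerate all binary patterns $P$ (up to negation and reversal) that contain \emph{no} factor from~\eqref{mainlist}, and show each is unavoidable by binary cube-free words. The candidate set is finite once a length cap is imposed, because any pattern long enough must contain some short avoidable factor; so only patterns up to a moderate length need inspection. For each remaining candidate $P$, the claim that every sufficiently long binary cube-free word meets $P$ is a finite verification on the trie of binary cube-free words up to a bound $L$ computable from $|P|$. The obstacle here is organizational rather than conceptual: one must choose the length threshold precisely and present the enumeration compactly. The list~\eqref{mainlist} is exactly the set of \emph{minimal} avoidable patterns, so the complement forms the finite base of the classification.

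For Part~3 the exceptional case $\{xxx,xyxyx\}$ is the binary overlap-free language, whose polynomial growth is the classical Restivo--Salemi result recalled in the introduction. For the remaining six minimal avoidable patterns---and hence for every non-exceptional avoidable $\{xxx,P\}$, since weakening the constraints can only enlarge the language---I would establish exponential growth by a Brinkhuis-pair construction: for each such $P$, find two distinct cube-free binary blocks $u_0,u_1$ of equal length such that every word in $\{u_0,u_1\}^+$ is cube-free and avoids $P$. This immediately gives at least $2^{n/|u_0|}$ avoiding words of length~$n$. I expect this to be the hardest step: the search for the six pairs $(u_0,u_1)$ is computer-assisted, and one must then reduce the global cube-freeness and $P$-avoidance of arbitrary concatenations to a finite local check on the overlaps between and within the two blocks.
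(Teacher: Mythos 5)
Your treatment of Parts 1 and 2 matches the paper's: a finite classification (length at most $5$ all unavoidable except $\sf xyxyx$; then lengths $6$, $7$, $8$, after which every pattern contains a listed factor), unavoidability certified by exhausting the finite trie of avoiding cube-free words, and avoidability certified by D0L-words of uniform morphisms with a computer-assisted local check. One caveat: the ``classical lemma'' you invoke, reducing avoidance of $P$ in a fixed point to a bounded search, is not automatic for arbitrary patterns. It holds here because each minimal pattern has both variables occurring in a square, so in a $k$-synchronizing uniform morphic word a long image of $P$ forces $|g({\sf x})|$ and $|g({\sf y})|$ to be multiples of the block length (this is the paper's Lemma~\ref{synch}); you would need to state and use that hypothesis.

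Part 3, however, contains a genuine error. A Brinkhuis pair in the sense you describe cannot exist: if every word of $\{u_0,u_1\}^+$ were cube-free, then in particular $u_0u_0u_0\in\{u_0,u_1\}^+$ would be cube-free, which is absurd. The count $2^{n/|u_0|}$ ranges over \emph{all} block sequences, including $u_0^m$, so the construction as stated proves nothing. The repair requires restricting which block sequences are allowed, and that is exactly where the real work lies. The paper uses two devices: (a) start from a single known avoiding $\omega$-word (the Thue--Morse word, or the fixed point $\mathbf m$ of $\mu$) and make $\Theta(n)$ \emph{independent local insertions} into its blocks, proving by a minimal-counterexample argument that no choice of insertions creates a cube or an occurrence of $P$ --- this yields $2^{cn}$ words without ever concatenating two freely chosen blocks; or (b) push forward the ternary square-free language (growth rate $\alpha>1.3$) through a uniform synchronizing morphism $g_i$ that kills cubes, $P$, and all large squares, so that exponential growth is inherited from the ternary side. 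If you want to keep a substitution-style argument, you must at minimum take images only of cube-free (or square-free ternary) source words and then prove the morphism preserves the relevant avoidance properties; the resulting count is $C(n/\ell)$ for the source language, not $2^{n/\ell}$.
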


This is an ``aggregate'' theorem, the proof of which does not follow a single main line but uses quite different techniques. So, we present this proof as a sequence of lesser theorems. Some of these theorems contain refinements to the main theorem (e.\,g., lower bounds for the growth rates of avoiding languages).

Statement 3 of Theorem~\ref{main} leaves little hope to find a subexponentially growing binary language avoiding a pair of patterns; so, we finish the paper by showing an example of such a language avoiding a triplet of binary patterns.

\smallskip
The text is organized as follows. After necessary preliminaries, in Sect.~\ref{sect:avoid} we prove statement~1 of Theorem~\ref{main}; our proof immediately implies statement~2. In Sect.~\ref{sect:low} we finish the proof of Theorem~\ref{main}, exhibiting exponential lower bounds for the cube-free languages avoiding the pattern $\sf xyxyxx$ and all patterns from \eqref{mainlist}, except for the pattern $\sf xyxyx$. In Sect.~5 we estimate actual growth rates of avoiding languages through the upper bounds obtained by computer. Finally, in Sect. 6 we give a new example of a language of polynomial growth. This language consists of cube-free words avoiding a pair of binary patterns.

\section{Preliminaries}

We study finite, right infinite, and two-sided infinite sequences over the main alphabet $\{0,1\}$ and call them \emph{words}, \emph{$\omega$-words}, and \emph{Z-words}, respectively. We also consider \emph{patterns}, which are words over the alphabet of variables $\{\sf{x,y}\}$. Standard notions of \emph{factor}, \emph{prefix}, and \emph{suffix} of a word are used. For a word $w$, we write $|w|$ for its length, $w[i]$ for its $i$th letter, and $w[i...j]$ for its factor starting in the $i$th position and ending in the $j$th position. Thus, $w=w[1...|w|]$. Letters in an $\omega$-word are numbered starting with 1. For a binary word or pattern $w$, its \emph{negation} is the word (resp., pattern) $\bar w$ such that $|w|=|\bar w|$ and $w[i]\ne\bar w[i]$ for any $i$. The \emph{reversal} of $w$ is the word $w[|w|]\cdots w[1]$. A word $w$ has \emph{period} $p$ if $w[1...|w|{-}p]=w[p{+}1...|w|]$. The \emph{exponent} of a word is the ratio between its length and its minimal period. A word is \emph{$\beta$-free} if the exponent of any of its factors is less than $\beta$. Two words are \emph{conjugates} if they can be represented as $uv$ and $vu$, for some words $u$ and $v$. If a word $uv$ has an integer exponent greater than 1, then $vu$ has the same exponent.

A \emph{language} is just a set of words. A language is \emph{factorial} if it is closed under taking factors of its elements. Any factorial language $L$ is determined by its set of \emph{minimal forbidden words}, i.\,e., the words that are not in $L$ while all their proper factors are in $L$. The \emph{growth rate} of a factorial language $L$ is defined as $\IR(L)=\lim_{n\to\infty} (C_L(n))^{1/n}$, where $C_L(n)$ is the number of words of length $n$ in $L$. An infinite language $L$ grows exponentially [subexponentially] if $\IR(L)>1$ [resp., $\IR(L)=1$]. A word $w$ is said to be \emph{(two-sided) extendable} in the language $L$ if $L$ contains, for any $n$, a word of the form $uwv$ such that $|u|,|v|\ge n$. The set of all extendable words in $L$ is denoted by $\e(L)$. 

A \emph{morphism} is any map $f$ from words to words satisfying the condition $f(w)=f(w[1])\cdots f(w[|w|])$ for each word $w$. A morphism is \emph{non-erasing} if the image of any non-empty word is non-empty, and \emph{$n$-uniform} if the images of all letters have length $n$. An $n$-uniform morphism $f$ is called \emph{$k$-synchronizing} if for any factor of length $k$ of any word $f(w)$, the starting positions of all occurrences of this factor in $f(w)$ are equal modulo $n$.

A word $w$ \emph{meets} a pattern $P$ if an image of $P$ under some non-erasing morphism is a factor of $w$; otherwise, $w$ \emph{avoids} $P$. The images of the pattern $\sf xx$ [resp., $\sf xxx$; $\sf xyxyx$] are called \emph{squares} [resp., \emph{cubes}, \emph{overlaps}]. The words avoiding $\sf xx$ [resp., $\sf xxx$; both $\sf xxx$ and $\sf xyxyx$] are \emph{square-free} [resp., \emph{cube-free}, \emph{overlap-free}]. 

If $f$ is a non-erasing morphism and $f(a)=au$ for a letter $a$ and a non-empty word $u$, then an infinite iteration of $f$ generates an $\omega$-word denoted by $\mathbf{f}=f^{\infty}(a)$. The $\omega$-words obtained in this way are called \emph{D0L-words} or \emph{purely morphic} words. The images of letters under a morphism $f$ are called $f$-\emph{blocks}. Note that the D0L-word $\mathbf f$ is a product of $f$-blocks, and also of $f^n$-blocks for any $n>1$, because the morphism $f^n$ generates the same D0L-word $\mathbf f$.

The \emph{Thue-Morse morphism} is defined by the rules $\theta(0)=01$, $\theta(1)=10$ and generates the \emph{Thue-Morse word} $\u=\theta^{\infty}(0)$. The factors of $\u$ are \emph{Thue-Morse factors}. We use the notation $\u_k=\theta^k(0)$ and $\v_k=\theta^k(1)$ for \emph{$\theta^k$-blocks}. The properties listed in Lemma~\ref{TM} below are well known and follow by induction from the facts that $\u$ is a product of $\theta$-blocks and $\theta(\u)=\u$. The third property was first proved by Thue \cite{Thue12}. In the same paper, Thue proved that $\u$ is an overlap-free word.

\begin{lmm} \label{TM}
1) The number of Thue-Morse factors of length $n$ is $\Theta(n)$.\\
2) For any fixed $k$, the number of pairs of equal adjacent $\theta^k$-blocks in any Thue-Morse factor of length $n$ is $n/(3{\cdot}2^k)+O(1)$.\\
3) If $vv$ is a Thue-Morse factor, then $v$ is either a $\theta^k$-block or a product of three alternating $\theta^k$-blocks, for some $k\ge0$. The position in which $vv$ ends in $\u$ is divisible by $2^k$ but not by $2^{k+1}$.
\end{lmm}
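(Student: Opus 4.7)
The plan is to exploit the identity $\theta(\u)=\u$ and the synchronized tiling of $\u$ by $\theta^k$-blocks. All three parts rely on \emph{desubstitution}: a factor of $\u$ whose endpoints are aligned with the $\theta$-grid is the $\theta$-image of a shorter factor of $\u$, while a misaligned factor differs from such an image by at most one letter on each side.

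For part~(1), I would bound the factor complexity $p(n)$ of $\u$ by $O(n)$ by classifying length-$n$ factors according to the parities of their endpoints: each has one of the four shapes $\theta(w')$, $a\theta(w')$, $\theta(w')b$, $a\theta(w')b$ for some factor $w'$ of $\u$ of length roughly $n/2$, yielding the standard bound $p(n)\le 4n$. The lower bound $p(n)\ge n+1$ follows from the Morse--Hedlund criterion and the aperiodicity of $\u$.

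For part~(2), two adjacent aligned $\theta^k$-blocks of $\u$ at parent positions $i,i{+}1$ are $\theta^k(\u[i])$ and $\theta^k(\u[i{+}1])$, so they are equal iff $\u[i]=\u[i{+}1]$. Hence, up to $O(1)$ boundary blocks in a length-$n$ window, the desired count equals the number of equal adjacent letter pairs in a parent factor of length $\lfloor n/2^k\rfloor+O(1)$, and the problem reduces to the case $k=0$. Writing $g(m)$ for the number of equal adjacent letter pairs in $\u[1..m]$, the identities $\theta(0)=01$ and $\theta(1)=10$ give the recursion $g(2m)=(m{-}1)-g(m)$ (interior of a $\theta$-block never contributes; a joint is equal iff the two parent letters differ), which solves to $g(m)=m/3+O(1)$ by telescoping. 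Substituting back yields $n/(3{\cdot}2^k)+O(1)$ for arbitrary $k$.

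Part~(3) is the main obstacle; I would argue by strong induction on $|v|$. For $|v|\le 3$ I enumerate: for $|v|=1$ the factors $00,11$ occur and give $v\in\{0,1\}$; for $|v|=2$ cube-freeness of $\u$ forbids $vv\in\{0000,1111\}$, leaving $v\in\{01,10\}$; for $|v|=3$ a short check rules out all $v$ except $010$ and $101$. These are exactly $\theta^0$-blocks, $\theta^1$-blocks, and products of three alternating $\theta^0$-blocks, matching the claim for $k\in\{0,1\}$. For $|v|\ge 4$ I would fix an occurrence of $vv$ in $\u$ and examine its endpoints modulo $2$; overlap-freeness of $\u$ (already proved by Thue) rules out the misaligned and odd-$|v|$ subcases, because trimming one letter on each side would produce an overlap in the desubstituted sequence. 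What remains is $vv=\theta(v'v')$ with $|v'|=|v|/2$, and applying the induction hypothesis to $v'$ transports its decomposition into $\theta^{k-1}$-blocks to a decomposition of $v$ into $\theta^k$-blocks. The divisibility assertion about the ending position of $vv$ follows because each desubstitution halves that position, so the stated $2^k$-divisibility is exactly the depth at which the descent hits its base case. The delicate point I expect to grind on hardest is verifying that the misaligned and odd-length subcases really do produce overlaps, and that the descent lands on $|v'|=3$ precisely in the ``long'' case, accounting for the factor $3$ in $|v|=3\cdot 2^k$ and excluding every other odd multiple of $2^k$.
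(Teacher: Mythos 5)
Your overall route---desubstitution via $\theta(\u)=\u$ and induction---is exactly what the paper intends: it gives no proof of this lemma, only the remark that the properties follow by induction from the block structure, plus a citation to Thue for part~3. Part~1 is fine once you absorb the boundary letters $a,b$ into the parent factor (as written, four shapes with free $a,b$ give a recursion with too many branches to telescope to $O(n)$; noting that $a$, resp.\ $b$, is half of a $\theta$-block whose other half can be recorded in the parent yields $p(n)\le 2\,p(\lceil n/2\rceil+1)$ and hence $p(n)=O(n)$). Part~3 is the classical Thue argument, and the two points you flag as delicate do close: in the misaligned even case, $\u[j]=\overline{\u[j-1]}$ for even $j$ turns $vv$ extended by one letter to the left into an overlap $a\,v''a\,v''a$; for odd $|v|\ge5$, every length-$5$ factor of $\u$ contains $00$ or $11$, which occur only astride $\theta$-block boundaries, so the two occurrences of $v$ at distance $|v|$ would have to start at positions of equal parity. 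With these filled in, the descent, the base cases $|v|\in\{1,3\}$, and the divisibility bookkeeping all work.

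The genuine gap is in part~2. The recursion $g(2m)=(m-1)-g(m)$ is correct, but it does not ``solve to $g(m)=m/3+O(1)$ by telescoping.'' Setting $d(m)=g(m)-m/3$, you get $d(2m)=-d(m)-1$ and, for odd arguments, $d(2m{+}1)=-d(m)+O(1)$; each of the $\log_2 m$ halving steps contributes a bounded error whose sign is not controlled, so the iteration yields only $g(m)=m/3+O(\log m)$. That bound is sharp, so no repair is possible: one checks from the recursion that $\u[i]=\u[i{+}1]$ holds exactly when the $2$-adic valuation $\nu_2(i)$ is odd, whence the count for the prefix of length $N{+}1$ equals $\lfloor N/2\rfloor-\lfloor N/4\rfloor+\lfloor N/8\rfloor-\cdots$, and for $N=2(4^m-1)/3$ (binary $1010\cdots10$) this is $N/3+m/3$, an excess of order $\log N$. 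So the statement you are asked to prove is itself too strong by a logarithmic factor, for every $k$ (the general case reduces to $k=0$ exactly as you describe). The honest output of your method is $n/(3\cdot2^k)+O(\log n)$, which is all the paper ever uses: in Theorems~\ref{xyxyxx} and~\ref{xxyxxy} this quantity only enters as an exponent of $2$, and $2^{n/24+O(\log n)}$ still has $n$-th root tending to $2^{1/24}$. Prove the $O(\log n)$ version and record that it suffices for the applications.
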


A set of patterns (in particular, a single pattern) is \emph{2-avoidable} if there exists a binary $\omega$-word avoiding this set, and \emph{2-D0L-avoidable} if such a D0L-word exists. The existence of an avoiding $\omega$-word is clearly equivalent to the existence of an infinite set of avoiding finite words.

\section{Avoidable and unavoidable patterns} \label{sect:avoid}

In this section we classify the binary patterns avoidable by binary cube-free words. As was already mentioned, the pattern $\sf xyxyx$ is avoided by the Thue-Morse word. The following observation can be easily checked by hand or by computer.

\begin{bsrvtn}
All binary patterns of length at most 5, except for the pattern $\sf xyxyx$, are unavoidable by binary cube-free words.
\end{bsrvtn}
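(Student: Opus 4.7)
My plan is to verify the observation by exhaustive case analysis over all binary patterns of length at most~$5$, taken up to the symmetries of negation $\sf x \leftrightarrow y$ and reversal. A preliminary reduction removes patterns containing $\sf xxx$ (equivalently $\sf yyy$) as a factor: for such $P$ every cube-free word automatically avoids $P$, so the set $\{\sf xxx,P\}$ collapses to $\{\sf xxx\}$, which is $2$-avoidable (e.g.\ by $\u$); these patterns fall outside the scope of the observation. Listing the surviving representatives is straightforward and yields only a handful of candidates at each length.

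For every remaining pattern $P \neq \sf xyxyx$, the goal is to produce a bound $N_P$ such that every binary cube-free word of length $N_P$ contains an instance of $P$. The trivial cases of length at most~$3$ are dispatched directly: $\sf x$, $\sf xx$, $\sf xy$ are met by any binary word of length $\geq 4$, and $\sf xxy$, $\sf xyx$ are met by any binary word of length $\geq 5$, without any appeal to cube-freeness. For the length-$4$ and length-$5$ patterns I would run a breadth-first search on the tree whose nodes are binary words that are simultaneously cube-free and $P$-avoiding, extending each node by $0$ and by $1$ and pruning any extension whose suffix creates either a cube or an instance of $P$. The claim is that for every such $P$ this tree is finite; its maximum depth yields $N_P$. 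The search is small enough to be carried out by hand, or verified instantly on a computer, which matches the paper's ``easily checked'' description.

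The one pattern for which the search does \emph{not} terminate is $P = \sf xyxyx$: here the Thue--Morse word $\u = \theta^{\infty}(0)$, proved cube-free and overlap-free by Thue~\cite{Thue12}, yields an infinite branch, so the set $\{\sf xxx, xyxyx\}$ is $2$-avoidable. This is the unique exception in length at most~$5$. The main substantive step of the argument is therefore mechanical — verifying termination of the pruned search for each non-exceptional $P$ — and I expect no real obstacle beyond ensuring that the enumeration of equivalence classes of patterns up to negation and reversal is complete and that the search depths are actually small enough to exhibit the bound $N_P$ explicitly in each case.
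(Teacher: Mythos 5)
Your proposal is correct and takes essentially the same route as the paper, which offers no argument beyond ``easily checked by hand or by computer'': reducing to representatives up to negation and reversal and then running a pruned exhaustive search over cube-free $P$-avoiding words, with the Thue--Morse word witnessing the unique infinite branch for $\sf xyxyx$, is exactly that check, and the explicit bounds you give for the short patterns are right. Your side remark that patterns containing $\sf xxx$ or $\sf yyy$ as a factor are trivially avoided by every cube-free word (and hence must be read as outside the statement's intended scope) is a legitimate and even useful clarification of the observation as literally worded.
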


Next we focus our attention on the patterns of length 6. For both avoidability and growth, the patterns can be studied up to negation and reversal. Thus, we obtain the list of eight patterns:
\begin{equation} \label{list6}
\sf xxyxxy,xxyxyx,xxyxyy,xxyyxx,xxyyxy,xyxxyx,xyxyyx,xyyxxy.
\end{equation}

The pattern $\sf xxyxyx$ is obviously avoided by the Thue-Morse word as it has the factor $\sf xyxyx$. The pattern $\sf xxyxxy$ is also avoided by the Thue-Morse word, as was first mentioned in \cite{Cas93a}. (For the complete set of binary patterns avoided by the Thue-Morse word see \cite{Sh96sem}.)
The last four words from the list \eqref{list6} are unavoidable, as can be easily checked by computer. The longest cube-free words avoiding these patterns are listed in Table~\ref{lengths}. The remaining two patterns $\sf xxyyxx$ and $\sf xxyxyy$ are avoidable, see Theorems~\ref{avoid1} and~\ref{avoid2} below.

It follows immediately from the classification of patterns of length 6 that almost all binary patterns of length 7 are avoidable. Only three patterns of length 7, namely, 
$$
\sf xxyyxyx,xyxxyxy,xyxxyyx,
$$
have no proper avoidable factors. The last of these patterns is unavoidable (see Table~\ref{lengths}), while the first two are avoidable (see Theorem~\ref{avoid2}). Finally, there is a unique pattern $\sf xyxxyyxy$ of length 8 for which all proper prefixes and suffixes are unavoidable. But this pattern is avoidable by the Thue-Morse word \cite{Sh96sem}. 

\begin{table}[htb]
\caption{Longest avoiding cube-free words for unavoidable patterns.} \label{lengths}
\centerline{
\begin{tabular}{|l|l|l|}
\hline
Pattern&Longest avoiding cube-free word $u$&$|u|$\\
\hline
{\sf xxyyxy}&\small 010100101101001011010010011001100&33\\
{\sf xyxxyx}&\small 00110101100101001101011001001101100101001101011001010011&56\\
{\sf xyxyyx}&\small 001100100110110010011011001001011&33\\
{\sf xyyxxy}&\small 0011011010010100101101100&25\\
{\sf xyxxyyx}&\small 0011001100100101101001011010010100101101100&43\\
\hline
\end{tabular}}
\end{table}

Thus, we have reduced statement 1 of Theorem~\ref{main} to the proof of Theorems~\ref{avoid1} and~\ref{avoid2}. Since all avoidability proofs are obtained by constructing D0L-words, we also get statement 2 of Theorem~\ref{main}.

\begin{thrm} \label{avoid1}
There exists a binary cube-free D0L-word avoiding the pattern $\sf xxyyxx$.
\end{thrm}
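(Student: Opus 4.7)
The plan is to exhibit an explicit non-erasing morphism $f\colon\{0,1\}^{*}\to\{0,1\}^{*}$, preferably $n$-uniform for some modest $n$, with $f(0)=0u$, and to show that the generated D0L-word $\mathbf{f}=f^{\infty}(0)$ is cube-free and avoids $\sf xxyyxx$. Since the Thue-Morse word does meet $\sf xxyyxx$ (take $x=0$, $y=1$, giving the factor $001100$), we will have to look outside the $\theta$-family. The natural route is to search by computer among short uniform morphisms for one whose initial segment of $\mathbf{f}$ is cube-free and $\sf xxyyxx$-free up to some length, then verify analytically that this persists forever. To make such a verification feasible, I would require $f$ to be $k$-synchronizing for some small $k\le n$ in the sense defined in the preliminaries.

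First, I would establish cube-freeness of $\mathbf{f}$ by the standard bootstrap. It suffices to show that $f$ sends binary cube-free words to cube-free words, and by a classical criterion for uniform morphisms this reduces to checking cube-freeness of $f(w)$ for all cube-free $w$ up to a length determined by $n$ and the synchronization index. This step is computational and routine once $f$ is fixed.

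Second, and this is the core of the argument, I would rule out occurrences of $\sf xxyyxx$ in $\mathbf{f}$ by descent. Assume for contradiction that $UUVVUU$ is a factor of $\mathbf{f}$ with $|U|,|V|\ge 1$. The six copies begin at positions $0,\;|U|,\;2|U|,\;2|U|+|V|,\;2|U|+2|V|,\;3|U|+2|V|$ within the occurrence. Once $|U|$ and $|V|$ exceed the synchronization threshold $k$, each of the six occurrences of the common factor $U$ (respectively $V$) must start at positions that agree modulo $n$; taking differences of the six starting positions above forces $n\mid|U|$ and $n\mid|V|$. Since $\mathbf{f}$ factors uniquely as a product of $f$-blocks and the occurrence aligns with block boundaries at both ends, $U$ and $V$ desubstitute to binary words $u,v$ with $uuvvuu$ a factor of $\mathbf{f}$ strictly shorter than the original, contradicting minimality of the counterexample. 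For the finitely many short cases with $|U|$ or $|V|$ below the synchronization threshold, a direct computer check over all factors of bounded length of $\mathbf{f}$ (equivalently, of $f^{r}(0)$ for a suitably small $r$) completes the argument.

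The main obstacle is not the descent skeleton itself but choosing $f$ so that both the cube-free verification and the short-case verification succeed simultaneously, and so that the alignment step really does drop $n\mid|U|$ and $n\mid|V|$ out without annoying boundary effects. In practice this is where the computer search pays off: a poorly chosen $f$ will produce short $\sf xxyyxx$ occurrences that no amount of synchronization can explain away. Choosing $f$ with a strong marker structure (for example, letters whose $f$-blocks can only occur in one "phase" inside $\mathbf{f}$) turns the alignment step into a short bookkeeping check and makes the whole proof routine.
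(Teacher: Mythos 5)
Your overall strategy -- an explicit uniform synchronizing morphism, descent by desubstitution once the variable lengths are forced to be multiples of the block length, and a finite computer check for the remaining cases -- is exactly the machinery the paper deploys (for the patterns $\sf xxyxyy$, $\sf xxyyxyx$, $\sf xyxxyxy$ in Theorem~\ref{avoid2}, via Lemma~\ref{synch}). However, for the specific pattern $\sf xxyyxx$ your case split has a genuine hole. You dispose of (a) the case where \emph{both} $|U|$ and $|V|$ exceed the synchronization threshold $k$, by descent, and (b) ``the finitely many short cases'' by a bounded computer check. But an occurrence $UUVVUU$ with $|U|<k$ and $|V|$ arbitrarily large falls into neither bucket: it has unbounded length, so no finite check covers it, and synchronization applied to the two occurrences of $V$ (at relative positions $2|U|$ and $2|U|+|V|$) yields only $n\mid|V|$ and gives no information whatsoever about $|U|$, so the desubstitution step cannot start. (Contrast this with $\sf xxyxyy$, where the paper verifies in Lemma~\ref{synch} that synchronizing \emph{either} variable forces \emph{both} lengths to be multiples of $n$; that computation fails for $\sf xxyyxx$, which is presumably why the paper does not use the generic lemma here.) A secondary, repairable issue: even after $n\mid|U|$ and $n\mid|V|$ are known, aligning the occurrence with block boundaries is not automatic and needs an argument (the paper uses conjugacy plus the fact that $\mu^k(0)$ and $\mu^k(1)$ differ only in their last letter, Lemma~\ref{muprop}(2),(5)); your appeal to ``marker structure'' points in the right direction but is not carried out. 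Finally, the theorem is existential, and no witness morphism is actually produced.

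The paper's proof closes the problematic mixed case precisely by its choice of witness: the $3$-uniform morphism $\mu(0)=010$, $\mu(1)=011$. Its fixed point $\mathbf{m}$ has the letter $0$ in every position $\equiv1$ and the letter $1$ in every position $\equiv2\pmod 3$ (Lemma~\ref{muprop}(1)), so squares $uu$ with $|u|\le 2$ can occur only in prescribed phases, and every longer square has $|u|$ divisible by $3$ (Lemma~\ref{muprop}(4)). From this the paper deduces that in any factor $uuvvuu$ \emph{both} $|u|$ and $|v|$ must be divisible by $3$, after which the conjugacy shift and desubstitution go through. If you want to salvage your plan, you must either prove an analogue of this phase rigidity for your morphism $f$, or show directly (as in Lemma~\ref{synch}) that the occurrence pattern of $\sf xxyyxx$ propagates divisibility by $n$ from the long variable to the short one -- and for this pattern the latter simply is not true in general.
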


Consider the morphism $\mu$ defined by the equalities $\mu(0)=010$, $\mu(1)=011$\footnote{The morphism $\mu'$ defined by $\mu'(0)=001$, $\mu'(1)=011$, 
also avoids $\{\sf xxx,xxyyxx\}$ (see \cite{SaSh12b}; independently discovered by J. Shallit, private communication). We prefer the morphism $\mu$ because its study allows us to prove that the avoiding language grows exponentially (see Theorem~\ref{xxyyxx}).}, and the D0L-word $\mathbf{m}=\mu^{\infty}(0)$. Some properties of the word $\mathbf{m}$ are gathered in the following lemma.

\begin{lmm} \label{muprop}
Let $k$ be an arbitrary nonnegative integer.\\
1) One has $\mathbf{m}[3k{+}1]=0$ and $\mathbf{m}[3k{+}2]=1$.\\
2) The last letter in the block $\mu^k(a)$ is $a$. All other letters in $\mu^k(0)$ and $\mu^k(1)$ coincide.\\
3) If $u$ is a factor of $\mathbf{m}$ and $|u|=3^k$, then the starting positions of all occurrences of $u$ in $\mathbf{m}$ are equal modulo $3^k$.\\
4) If $\mathbf{m}$ contains a square $uu$ and $3^k\le|u|<3^{k+1}$, then $|u|\in\{3^k,2\cdot3^k\}$.\\
5) Suppose that $\mathbf{m}[r_13^k{+}c\ldots r_23^k{+}c{-}1]$ is a square for some integers $r_1,r_2,c$ such that $0<c\le3^k$. Then the word $\mathbf{m}[r_13^k{+}1\ldots r_23^k]$ is a square as well.
\end{lmm}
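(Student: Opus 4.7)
The plan is to prove the five parts in order, each leaning on its predecessors.

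Parts~1 and~2 are direct inductions from the definition of $\mu$. For part~1, both $\mu$-blocks begin with $01$; since $\mathbf{m}$ is a concatenation of such blocks, the $(3k{+}1)$-th letter of $\mathbf{m}$ equals $0$ and the $(3k{+}2)$-th letter equals $1$. For part~2 I induct on $k$: the case $k=1$ follows from $\mu(0)=010$ and $\mu(1)=011$, and the step uses $\mu^k(a)=\mu(\mu^{k-1}(a))$. By the inductive hypothesis, $\mu^{k-1}(0)$ and $\mu^{k-1}(1)$ agree on all but their last letters, which are $0$ and $1$ respectively; since $\mu(0)$ and $\mu(1)$ share their first two letters and $\mu(a)$ ends in $a$, the words $\mu^k(0)$ and $\mu^k(1)$ agree everywhere except on the final letter, which is $a$ in $\mu^k(a)$.

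The technical heart of the lemma is part~3, which I prove by induction on $k$. For the base $k=1$, part~1 pins down every letter at a position $\not\equiv 0\pmod 3$, so the length-$3$ factors of $\mathbf{m}$ split by starting residue into six pairwise distinct words: $\{010,011\}$ at starting positions $\equiv 1$, $\{100,110\}$ at $\equiv 2$, and $\{001,101\}$ at $\equiv 0\pmod 3$; hence the residue is determined by the factor. For the inductive step, suppose $u$ of length $3^{k+1}$ occurs at positions $p_1,p_2$. The length-$3$ prefix of $u$ and the base case give $p_1\equiv p_2\pmod 3$. Let $s\in\{0,1,2\}$ be the shift with $p_i{-}s\equiv 1\pmod 3$: by part~1 the letters at positions $p_i{-}s,\ldots,p_i{-}1$ form the same fixed word $\pi$ at both occurrences, so $\pi u$ occurs at both $p_1{-}s$ and $p_2{-}s$. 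Its length-$3^{k+1}$ prefix $u'$ starts at a position $\equiv 1\pmod 3$ and ends at a position $\equiv 0\pmod 3$, and therefore decomposes into complete $\mu$-blocks. Using $\mathbf{m}=\mu(\mathbf{m})$ together with $\mu(0)\neq\mu(1)$, one recovers a unique length-$3^k$ factor $v$ of $\mathbf{m}$ with $\mu(v)=u'$; this $v$ occurs in $\mathbf{m}$ at the two positions $(p_i{-}s{-}1)/3{+}1$. The inductive hypothesis forces these to be congruent modulo $3^k$, whence $p_1\equiv p_2\pmod{3^{k+1}}$.

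Parts~4 and~5 then drop out quickly. For part~4, the length-$3^k$ prefix of $u$ occurs twice inside the square $uu$, at positions differing by $|u|$; part~3 forces $|u|\equiv 0\pmod{3^k}$, and in $[3^k,3^{k+1})$ the only multiples of $3^k$ are $3^k$ and $2{\cdot}3^k$. For part~5, since $3^k$ is odd (for $k\ge 1$) and the given square has even length, $r_2{-}r_1=2m$ and the square has period $m{\cdot}3^k$. To promote this to periodicity of $\mathbf{m}[r_13^k{+}1\ldots r_23^k]$ we must verify $\mathbf{m}[r_13^k{+}j]=\mathbf{m}[(r_1{+}m)3^k{+}j]$ for each $j\in\{1,\ldots,c{-}1\}$. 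But these two positions have internal offset $j<3^k$ inside their respective $\mu^k$-blocks, and part~2 says that the $j$-th letter of $\mu^k(0)$ equals the $j$-th letter of $\mu^k(1)$ whenever $j<3^k$, so the required equalities hold automatically.

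The main obstacle is the inductive step of part~3, where one must pad two different occurrences of $u$ into a common factor aligned with the $\mu$-block structure without introducing any discrepancy between them. Part~1 is precisely what makes this bookkeeping work: the padding letters depend only on the common residue class and are identical at both occurrences, so $u'$ is well defined and the passage to its preimage $v$ under $\mu$ feeds cleanly into the inductive hypothesis via $\mathbf{m}=\mu(\mathbf{m})$.
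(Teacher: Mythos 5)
Your proof is correct and follows essentially the same strategy as the paper's: parts 1, 2, and 4 are handled identically, and your induction for part 3 pads an occurrence leftward to a $\mu$-block boundary and pulls the factor back through $\mu$ exactly as the paper does (you align modulo $3$ using part 1 where the paper first aligns modulo $3^{k-1}$ using the inductive hypothesis and part 2 --- a cosmetic difference). For part 5 you verify the missing period equalities letter by letter via part 2, whereas the paper packages the same observation (the first $3^k-1$ letters of every $\mu^k$-block coincide) as a conjugacy argument; both are sound.
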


\begin{proof}
Properties 1 and 2 follow immediately from the definition of $\mu$. Let us prove property 3 by induction on $k$.

The base cases are $k=0$ (holds trivially) and $k=1$, which follows directly from property~1. Now we let $k\ge2$ and prove the inductive step. Assume to the contrary that two occurrences of the factor $u$ of length $3^k$ have starting positions $j_1$ and $j_2$ that are different modulo $3^k$. These positions are also the starting positions of the occurrences of the factor $u[1\ldots3^{k-1}]$. Hence, $j_1\equiv j_2\pmod{3^{k-1}}$ by the inductive assumption. By property~2, both considered occurrences of $u$ are preceded by the same $(j_1\bmod3^{k-1})-1$ letters. Thus, $\mathbf{m}$ contains a factor $u'$ such that $|u'|=3^k$, $u'$ is a product of $\mu^{k-1}$-blocks, and the starting positions of two occurrences of $u'$ are different modulo $3^k$. Hence, $\mathbf{m}$ also contains the factor $\mu^{-1}(u')$ of length $3^{k-1}$ such that the starting positions of two occurrences of $\mu^{-1}(u')$ are different modulo $3^{k-1}$, in contradiction with the inductive assumption. Therefore, property~3 is proved.

Property 4 is an immediate consequence of property 3. In order to prove property~5, we note that property~2 implies the equality $\mathbf{m}[r_13^k{+}1\ldots r_13^k{+}c{-}1]=\mathbf{m}[r_23^k{+}1\ldots r_23^k{+}c{-}1]$. Hence, the two considered words are conjugates. But all conjugates of a square are squares.
\end{proof}

\begin{proof}[Proof of Theorem~\ref{avoid1}]
Let us prove that $\mathbf{m}$ is cube-free and avoids $\sf xxyyxx$. Aiming at a contradiction, first assume that $\mathbf{m}$ contains a cube; consider the shortest one, say $u^3$. Then $|u|>2$ in view of Lemma~\ref{muprop},\,1. Hence $|u|\equiv0\pmod3$ Lemma~\ref{muprop},\,4. Using Lemma~\ref{muprop},\,5, we find a cube $u'^3$ which is a product of $\mu$-blocks. Then $\mathbf{m}$ contains the cube $(\mu^{-1}(u'))^3$, in contradiction with the choice of $u^3$.

The argument for the pattern $\sf xxyyxx$ is essentially the same. If $\mathbf{m}$ has a factor $uuvvuu$, then Lemma~\ref{muprop},\,1 implies that at least one of the numbers $|u|,|v|$ is greater than 2. Then this number is divisible by 3 by Lemma~\ref{muprop},\,4, and hence the other number is divisible by 3 too (Lemma~\ref{muprop},\,3). Therefore, we can apply Lemma~\ref{muprop},\,5 to get a factor $u'u'v'v'u'u'$ which begins with the starting position of a $\mu$-block. Then $\mathbf{m}$ contains a shorter forbidden factor $\mu^{-1}(u'u'v'v'u'u')$, contradicting to the choice of $uuvvuu$.
\end{proof}

\begin{thrm} \label{avoid2}
There exist binary cube-free D0L-words avoiding the patterns $\sf xxyxyy$, $\sf xxyyxyx$, and $\sf xyxxyxy$, respectively\footnote{2-D0L-avoidability of the pattern $\sf xxyxyy$ was first observed by J. Cassaigne who found a 12-uniform avoiding cube-free morphism (private communication). This pattern is also avoided by a cube-free ``quasi-morphism'' defined in \cite{SaSh12b}.}. 
\end{thrm}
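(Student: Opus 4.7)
The plan is to mirror the proof of Theorem~\ref{avoid1}: for each of the three patterns $P\in\{\sf xxyxyy,xxyyxyx,xyxxyxy\}$, I would exhibit an explicit uniform binary morphism $f_P$ and then prove by induction that the D0L-word $f_P^\infty(0)$ is simultaneously cube-free and $P$-avoiding. The footnote already points to a $12$-uniform morphism of Cassaigne working for $\sf xxyxyy$, so the first step is to locate or reconstruct such a morphism for each pattern, most likely by computer search over uniform morphisms $f$ with $f(0)=0u$, $f(1)=1v$ of moderate length $n$, retaining those whose iterates remain cube-free and $P$-free up to a length sufficient to seed the synchronization argument below.

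For each candidate morphism $f$ of uniformity $n$, the verification splits into two steps parallel to Lemma~\ref{muprop} and the proof of Theorem~\ref{avoid1}. Step one is a synchronization lemma: there is a threshold $k_0$ such that every factor of $\mathbf{f}=f^\infty(0)$ of length at least $k_0$ occurs only at positions congruent modulo $n$. This follows from a finite computer enumeration of the length-$k_0$ factors appearing in a long enough prefix of $\mathbf{f}$, together with induction lifting the property to arbitrary lengths via the equality $\mathbf{f}=f(\mathbf{f})$. As in property~4 of Lemma~\ref{muprop}, a direct consequence is that the period of any sufficiently long repetition in $\mathbf{f}$ must be a multiple of $n$.

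Step two is the minimality argument. Suppose for contradiction that $\mathbf{f}$ contains an occurrence of $\sf xxx$ or of $P$ of minimal total length. Lengths below an explicit small bound are ruled out by direct inspection. Otherwise, the synchronization lemma pins down the starting positions of the $\sf x$- and $\sf y$-blocks modulo $n$; after a short cyclic shift in the style of property~5 of Lemma~\ref{muprop}, the occurrence becomes a product of $f$-blocks, and applying $f^{-1}$ then yields a strictly shorter forbidden occurrence in $\mathbf{f}$, contradicting minimality.

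The main obstacle, I expect, is the search step: for the longer patterns $\sf xxyyxyx$ and $\sf xyxxyxy$, small uniformities may not yield morphisms with strong enough synchronization, so $n$ --- and hence the size of the finite case analysis in both steps --- may have to be substantially larger than the value $n=3$ used for $\sf xxyyxx$. A secondary technical difficulty is that each of the three patterns contains multiple occurrences of \emph{both} variables, so one must separately ensure that $|{\sf x}|$ and $|{\sf y}|$ in a forbidden occurrence are each forced to be multiples of $n$; for $\sf xxyyxx$ this fell out uniformly from property~3 applied to two long repetitions, but here it will need a slightly tailored period argument per pattern, matching the occurrences of the less frequent variable against the synchronization constraint.
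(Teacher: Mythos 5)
Your approach---computer-found uniform morphisms, a synchronization lemma forcing both $|g({\sf x})|$ and $|g({\sf y})|$ to be multiples of the uniformity, a block-alignment shift followed by applying $f^{-1}$ to descend to a bounded-length occurrence, and a finite check of short factors---is essentially the paper's own strategy (its Lemma~\ref{synch}, combined with the Richomme--Wlazinski test of Lemma~\ref{3free} for cube-freeness). The only divergence is practical rather than logical: the paper's morphisms are merely $7$-, $5$-, and $6$-uniform, so the blow-up in search and case analysis you anticipate for the longer patterns does not materialize.
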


Our proof involves a rather short computer check based on the following two lemmas.

\begin{lmm}[Richomme, Wlazinski,\cite{RiWl00}] \label{3free}
A morphism $f:\{0,1\}\to\{0,1\}$ is cube-free if and only if the word $f(001101011011001001010011)$ is cube-free.
\end{lmm}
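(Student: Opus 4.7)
The plan is to follow a standard ``test-word'' strategy. First I would verify that the word $t := 001101011011001001010011$ is itself cube-free---a routine direct check, since any cube in a binary word of length $24$ has period at most $8$, so only finitely many windows need inspection. Granted this, the ``only if'' direction is immediate from the definition: a cube-free morphism preserves cube-freeness on all cube-free binary inputs, so in particular $f(t)$ is cube-free.

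For the converse I would argue by contradiction. Suppose $f(t)$ is cube-free but there exists some binary cube-free word $u$ with $f(u)$ containing a cube; choose such a counterexample $u$ of minimal length and, inside $f(u)$, a shortest cube $xxx$. By minimality of $u$, the cube must straddle both ends of the block decomposition $f(u) = f(u[1]) \cdots f(u[|u|])$, for otherwise a proper factor of $u$ would already serve as a witness. The main step, and the principal obstacle, is then to bound $|u|$ by a small absolute constant. This is a Thue-style periodicity analysis: the cube $xxx$ imposes a long period-$|x|$ region inside $f(u)$, and pushing this period against the block boundaries of $f$ forces either an equality of two adjacent $f$-blocks (which translates into a square inside $u$ that can be shortened) or an alignment already witnessed on a strictly shorter cube-free input. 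Splitting into cases by the residues of the three starting positions of $x$ modulo $|f(0)|$ and $|f(1)|$ yields an explicit upper bound $N$ on $|u|$. The bookkeeping is the delicate part, especially when $|f(0)| \neq |f(1)|$ and block boundaries fall at irregular positions.

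The proof then concludes with a finite verification. Having bounded $|u| \leq N$, one enumerates the binary cube-free words of length at most $N$ and checks that every one of them occurs as a factor of $t$. The cleverness of Richomme and Wlazinski lies precisely in packaging all of these ``obstacle'' words into a single word of length $24$. The hypothetical minimal witness $u$ would then be a factor of $t$, so $f(u)$ would be a factor of $f(t)$ and the cube $xxx$ would persist inside $f(t)$, contradicting our standing assumption. Hence no such $u$ exists and $f$ is cube-free, as required.
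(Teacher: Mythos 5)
The paper does not prove this lemma at all: it is imported verbatim from Richomme and Wlazinski \cite{RiWl00} and used as a black box, so there is no in-paper argument to compare yours against. Judged on its own terms, your sketch is right about the easy direction and about the general shape of the hard one (minimal witness $u$, shortest cube in $f(u)$, periodicity versus block boundaries), but its concluding step contains a genuine, quantifiable gap.

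The problem is the claim that one finishes by checking that \emph{every} cube-free word of length at most $N$ is a factor of $t=001101011011001001010011$, so that the minimal witness $u$ is itself a factor of $t$. This is arithmetically impossible for any bound $N$ your periodicity analysis could plausibly produce: $t$ has at most $24-n+1$ distinct factors of length $n$, while the number of binary cube-free words of length $n$ is $2,4,6,10,16,24,36,\dots$ for $n=1,2,\dots$; already at $n=6$ there are $24$ cube-free words but at most $19$ factors, and the known test-set bound for binary cube-free morphisms is length $10$ (with $118$ cube-free words against at most $15$ factors of $t$). So the ``packaging'' you attribute to Richomme and Wlazinski cannot be the packaging of all short cube-free words. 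The actual argument must extract from the case analysis a much smaller, structured family of witness configurations --- specific alignments of a minimal cube against the $f$-blocks, each ruled out by the cube-freeness of $f(v)$ for one of a handful of designated words $v$ --- and it is only that restricted list that needs to occur among the factors of $t$. Identifying that list is the real content of the lemma, and it is exactly the piece your proposal leaves open; as written, the reduction from ``$f(t)$ cube-free'' to ``$f(u)$ cube-free for the minimal witness $u$'' does not go through.
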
 

\begin{lmm} \label{synch}
Suppose that an $\omega$-word $\mathbf f$ is generated by a $k$-synchronizing $n$-uniform cube-free binary morphism $f$, and $P\in\{\sf xxyxyy, xyyxyx\}$. Then $\mathbf f$ meets $P$ if and only if $\mathbf f$ contains the factor $g(P)$ for some morphism $g$ satisfying $|g({\sf x})|,|g({\sf y})|<k$.
\end{lmm}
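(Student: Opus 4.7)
The backward direction is trivial: if $g(P)$ is a factor of $\mathbf{f}$, then $\mathbf{f}$ meets $P$ by definition. For the forward direction, the plan is to argue by minimality. Fix a morphism $h$ such that $h(P)$ is a factor of $\mathbf{f}$, chosen to minimize $|h({\sf x})|+|h({\sf y})|$; set $A=h({\sf x})$, $B=h({\sf y})$, so that $h(P)=AABABB$ for $P={\sf xxyxyy}$ and $h(P)=ABBABA$ for $P={\sf xyyxyx}$. The goal is to show $|A|,|B|<k$, whereupon $g=h$ satisfies the conclusion.

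Assume for contradiction that $|A|\ge k$. Both patterns contain three occurrences of $\sf x$, so $h(P)$ contains three copies of $A$ whose starting positions differ pairwise by specific linear combinations of $|A|$ and $|B|$: for ${\sf xxyxyy}$ by $|A|$ and $|A|+|B|$; for ${\sf xyyxyx}$ by $|A|+2|B|$ and $|A|+|B|$. Applying the $k$-synchronization of $f$ to the length-$k$ prefix of $A$, which occurs in $\mathbf{f}$ at each of these three starting positions, forces each pairwise difference to be divisible by $n$. Solving the resulting linear system yields $n\mid|A|$ and $n\mid|B|$ in both cases; the symmetric argument applied to the three copies of $B$ handles the case $|B|\ge k$.

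Write $|A|=\alpha n$, $|B|=\beta n$ and desubstitute. Extend the occurrence of $h(P)$ on each side up to the nearest $f$-block boundary; the extension $E$ is a block-aligned factor of $\mathbf{f}$. Since $\mathbf{f}=f(\mathbf{f})$ and $f$ is injective on letters (otherwise $f(0)=f(1)$ would force a cube inside $f(01)$), we have $E=f(w_0)$ for a unique factor $w_0$ of $\mathbf{f}$. Parsing the block structure of $E$: each copy of $A$ contributes $\alpha-1$ interior $f$-blocks, giving three identical length-$(\alpha-1)$ subwords of $w_0$; each copy of $B$ similarly gives three identical length-$(\beta-1)$ subwords; and each boundary between two consecutive copies of $A$ or $B$ in $h(P)$ lies inside a single $f$-block whose corresponding letter of $w_0$ encodes the ``transition'' between the copies, with two transitions of the same type forced equal by injectivity of $f$. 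In the block-aligned sub-case, no transitions are present and $w_0$ itself has the form $h'(P)$ with image sizes $\alpha$ and $\beta$, strictly smaller than $\alpha n$ and $\beta n$ for $n\ge 2$ and contradicting the minimality of $h$.

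The main obstacle is the non-block-aligned sub-case, which is precisely where the restriction to these two patterns becomes essential. I plan to exhibit a morphism $g$ with $|g({\sf x})|\le\alpha$ and $|g({\sf y})|\le\beta$ such that $g(P)$ is a factor of $w_0$, constructed by gluing each interior subword to an adjacent transition letter. The structural compatibility needed for the glued groups to spell out exactly $P$---matching the repeated transitions of each type to the repeated variable-pairs of $P$---is what singles out ${\sf xxyxyy}$ and ${\sf xyyxyx}$; other binary patterns would have transition sequences that do not rebracket into $P$. Verifying this compatibility carefully, across both patterns and all possible residues of the offset modulo $n$, is the delicate step and the main technical hurdle I would expect.
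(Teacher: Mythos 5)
Your backward direction and your divisibility step are both correct and match the paper: the three occurrences of the variable with long image inside $h(P)$ have pairwise start-position differences $|A|,\,|A|{+}|B|$ (for $\sf xxyxyy$) or $|A|{+}2|B|,\,|A|{+}|B|$ (for $\sf xyyxyx$), and $k$-synchronization forces these to vanish modulo $n$, whence $n$ divides both $|A|$ and $|B|$. But the descent step --- producing a strictly shorter image of $P$ --- is exactly where your write-up stops being a proof: you yourself label the non-block-aligned sub-case ``the delicate step and the main technical hurdle I would expect,'' and the gluing you propose is not justified. The concrete difficulty is this: to define a morphism on $w_0$ by attaching to each interior subword the junction block that follows it, you need the junction block after a copy of $A$ to be the \emph{same} letter of $w_0$ regardless of whether the next variable image is $A$ or $B$ (in $AABABB$ the $A|A$ and $A|B$ junction blocks share only their length-$l$ prefix $x_3$, not a priori their length-$r$ suffix), and injectivity of $f$ alone does not give you this.

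The paper closes this gap with one observation that makes all of your anticipated case analysis unnecessary: writing each variable image as $v_1v_2v_3$ with $|v_1|=r$, $|v_3|=l$, $l+r=n$ (legitimate because all variable images start at the same residue mod $n$), note that the binary cube-free morphism $f$ has $f(0)\ne f(1)$, so these two blocks differ either within their first $l$ letters or within their last $r$ letters; hence every $f$-block is determined either by its length-$l$ prefix or by its length-$r$ suffix. In the first case each junction block is determined by $x_3$ or $y_3$ alone, so one may shift the whole occurrence of $h(P)$ right by $r$ to a block-aligned occurrence of the same length; in the second case shift left by $l$. Applying $f^{-1}$ then yields an image of $P$ shorter by a factor of $n$, completing the descent. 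Note also that this step is pattern-independent: contrary to your diagnosis, the restriction to $\sf xxyxyy$ and $\sf xyyxyx$ is consumed entirely by the divisibility step (these patterns place three occurrences of each variable so that the start-position differences generate both $|A|$ and $|B|$ modulo $n$), not by any rebracketing compatibility in the desubstitution.
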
 

\begin{proof}
We assume that the word $\mathbf f$ contains a factor of the form $g(P)$ such that $\max\{|g({\sf x})|,|g({\sf y})|\}\ge k$ and prove that $\mathbf f$ must contain a shorter image of $P$. Let $x'=g({\sf x}),y'=g({\sf y}), |x'|\ge k$. The starting positions of all occurrences of $x'$ in $\mathbf f$ are equal modulo $n$ by the definition of $k$-synchronizing morphism. Considering the occurrences inside $g(P)$, we see that $|x'|\equiv|x'y'|\equiv0\pmod n$ if $P=\sf xxyxyy$ and $|x'y'y'|\equiv|x'y'|\equiv0\pmod n$ if $P=\sf xyyxyx$. Thus, in both cases $|x'|$ and $|y'|$ are divisible by $n$. The assumption $|y'|\ge k$ leads to the same result.

Now we can write $x'=x_1x_2x_3, y'=y_1y_2y_3$, where $x_1,y_1$ [respectively, $x_2,y_2$; $x_3,y_3$] are suffixes [respectively, products; prefixes] of $f$-blocks, $|x_1|=|y_1|=r$, $|x_3|=|y_3|=l$, $l+r=n$. An $f$-block is determined either by its prefix of length $l$ or by its suffix of length $r$. Thus, $\mathbf f$ contains another image of $P$ of length $|g(P)|$: the starting position of this image is either $r$ symbols to the right or $l$ symbols to the left from the starting position of $g(P)$. This new image $h(P)$ is a product of $f$-blocks. As a result, $h({\sf x})$ and $h({\sf y})$ are products of $f$-blocks also. Hence, $\mathbf f$ contains an image of $P$ under the composition of $f^{-1}$ and $h$; this image is shorter than $g(P)$, as required.
\end{proof}

\begin{proof}[Proof of Theorem~\ref{avoid2}]
Consider the morphisms $h_1$, $h_2$, and $h_3$ such that
\begin{equation*}
\arraycolsep=3mm
\begin{array}{lll}
h_1(0)= 0110010&h_2(0)= 01001&h_3(0)= 010011\\
h_1(1)= 1001101&h_2(1)= 10110&h_3(1)= 011001
\end{array}
\end{equation*}
Checking the condition of Lemma~\ref{3free} by computer, we obtain that all these morphisms are cube-free. Furthermore, it can be directly verified that $h_1$, $h_2$, and $h_3$ are 6-, 6-, and 5-synchronizing, respectively. Hence, if the D0L-word $\mathbf h_1$ generated by $h_1$ meets the pattern $P=\sf xxyxyy$, then by Lemma~\ref{synch}, $\mathbf h_1$ contains an image of $P$ of length at most $5\cdot6=30$. Thus, it is enough to check all factors of $\mathbf h_1$ of length at most 30. Any such factor is contained in the image of a factor of $\mathbf h_1$ of length 6; this factor, in turn, belongs to the image of a factor of length 2, while all factors of length 2 can be found in $h_1(0)$. Therefore, we just need to examine all factors of length up to 30 in the word $h_1^3(0)$. A computer check shows that there are no images of $P$ among such factors. So, we conclude that $\mathbf h_1$ avoids both cubes and the pattern $\sf xxyxyy$. 

Similar argument for the morphism $h_2$ and the pattern $\sf xxyyxyx$, containing $\sf xyyxyx$, shows that it is enough to examine the factors of length up to 35 in the word $h_2^4(0)$. A computer check implies the desired avoidability result. In the same way, we check the factors of length up to 28 in $h_3^3(0)$ to show that the corresponding D0L-word avoids the pattern $\sf xyxxyxy$. The theorem is proved.
\end{proof}

\section{Lower bounds for the growth rates} \label{sect:low}

In this section we prove lower bounds for the growth rates of the languages avoiding the sets $\{{\sf xxx},P\}$, where $P=\sf xyxyxx$ or $P$ is any of the patterns listed in \eqref{mainlist}, except for the pattern $\sf xyxyx$. In particular, the results of this section imply statement~3 of Theorem~\ref{main}. 

The bounds are obtained using two different methods. The first method uses block replacing in the factors of D0L-words, and is purely analytic. We apply this method to the patterns $\sf xyxyxx, xxyxxy, xxyyxx$. The second method uses morphisms that act on the ternary alphabet and map ternary square-free words to binary cube-free words avoiding the given patterns. This method requires some computer search and check; we apply it to the remaining four patterns. (The second method can be applied for all patterns, but the analytic bounds are a bit better.)

\subsection{Replacing blocks in D0L-words}

\begin{thrm} \label{xyxyxx}
The number of binary cube-free words avoiding the pattern $\sf xyxyxx$ grows exponentially with the rate of at least $2^{1/24}\approx1.0293$.
\end{thrm}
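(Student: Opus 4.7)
The plan is to produce exponentially many binary cube-free words avoiding $\sf xyxyxx$ by performing independent local modifications to factors of the Thue-Morse word $\u = \theta^\infty(0)$. Since $\u$ is overlap-free, every factor of $\u$ avoids both $\sf xxx$ and $\sf xyxyx$, hence also the longer pattern $\sf xyxyxx$. So the Thue-Morse factors already lie in the language; the task is to find many more words of length $n$ by perturbing each Thue-Morse factor at many sites without destroying the two avoidance properties.

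The perturbation sites will be the pairs of adjacent equal $\theta^3$-blocks. By Lemma~\ref{TM}(2) applied with $k=3$, a Thue-Morse factor of length $n$ contains $n/24+O(1)$ such pairs $vv$ with $|v|=8$. I would introduce a binary choice at each such pair: either leave the 16-letter block $vv$ intact, or replace it by a specific alternative 16-letter word $w\neq vv$ chosen once and for all for each of the two cases $v=\u_3,\v_3$. A suitable candidate can be found by direct search among length-16 binary words; the requirements on $w$ are that $w$ itself is cube-free and free of $\sf xyxyxx$, and that the replacements can be composed freely. If the construction goes through, the number of distinct words of length $n$ obtained from a fixed Thue-Morse factor is at least $2^{n/24+O(1)}$, which immediately gives the lower bound $\IR\geq 2^{1/24}$.

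The real work is verifying that \emph{arbitrary combinations} of these replacements preserve both cube-freeness and the avoidance of $\sf xyxyxx$. Because modifications are local and the minimum spacing between two consecutive $\theta^3$-block squares in $\u$ is bounded below by the structural result Lemma~\ref{TM}(3), any hypothetical forbidden factor introduced by several simultaneous modifications must either (i) lie entirely inside a single modified window of bounded length, or (ii) straddle one such window and overlap the Thue-Morse context on at most a bounded number of surrounding $\theta^3$-blocks on each side. Case (i) is a finite check on $w$ alone. For case (ii), I would bound the period of any putative repetition crossing the window: the surrounding Thue-Morse context is itself overlap-free, so any new cube $uuu$ or occurrence of $g(\sf xyxyxx)$ crossing the modification must have $|u|$ or $|g(\sf x)|$ of size $O(1)$, after which only finitely many configurations remain to inspect.

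The main obstacle, and the reason the proof is genuinely analytical rather than a pure computer check, is controlling case (ii): the interaction of several neighbouring replacements with the alternating $\theta^3$-block pattern around each square. I expect the key lemma to be a statement of the form ``any factor of length at most $L$ (for some explicit constant like $L=48$ or $L=80$) of the modified word is conjugate, up to a bounded-length boundary, to a factor of the Thue-Morse word $\u$ or to a member of a small finite catalogue of windows.'' Once that structural reduction is in place, cube-freeness and the absence of $\sf xyxyxx$ reduce to checking the catalogue, and the independence of the $n/24+O(1)$ choices follows, completing the lower bound $2^{1/24}$.
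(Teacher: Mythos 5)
Your overall strategy coincides with the paper's: perturb Thue--Morse factors at $n/24+O(1)$ independent sites and count via Lemma~\ref{TM}. (You place the modifications at squares of $\theta^3$-blocks, counted by Lemma~\ref{TM},2 with $k=3$; the paper instead inserts a single letter into each $\theta^5$-block, obtaining $n/32+n/96=n/24$ sites. Your length-preserving variant would, if completed, even simplify the final count.) However, the argument as written has two genuine gaps. First, the replacement word $w$ is never exhibited, and its existence is the entire content of the construction: it is not clear that a $16$-letter word $w\ne\u_3\u_3$ exists that can be substituted in \emph{every} Thue--Morse context, in arbitrary combinations, without creating a cube or an image of $\sf xyxyxx$. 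Note in particular that two modification sites can be adjacent (pairs of equal adjacent $\theta^3$-blocks correspond to the factors $00$ and $11$ of $\u$ read at the block level, and these occur back to back), so the modified windows touch and their interaction must be checked, not assumed.

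Second, and more seriously, your dichotomy (i)/(ii) for hypothetical forbidden factors is incomplete. An image $uvuvuu$ of $\sf xyxyxx$ (or a cube) can be arbitrarily long and contain modified windows inside both occurrences of $v$, aligned with each other; the overlap-freeness of the surrounding Thue--Morse context does not bound $|u|$ or $|v|$ in that configuration, so the problem does not reduce to a bounded catalogue of windows. This is exactly where the paper does its real work: it first proves that a minimal forbidden image $uvuvuu$ has $u\in\{0,1,01,10\}$, then introduces short ``indicator'' factors that are not Thue--Morse factors and hence certify the presence and position of a modification, uses them to show that a modification inside one copy of $v$ forces the same modification in the other copy (so deleting both yields a shorter forbidden factor, contradicting minimality), and finally disposes of the remaining finitely many positions of a modification relative to the three occurrences of $u$ using Lemma~\ref{TM},3. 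Your conjectured ``key lemma'' about factors of bounded length $L$ cannot substitute for such a minimality-and-deletion argument, because it says nothing about long forbidden factors. Until you exhibit $w$ and supply an argument of this kind, the lower bound is not established.
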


\begin{proof}
Let $L$ be the language of all binary cube-free words avoiding $\sf xyxyxx$. Recall that $L$ contains all Thue-Morse factors. Consider the ``distorted'' $\theta^5$-block
\begin{equation} \label{insert}
\u'=0110\,1001\,1001\,\pmb{1}\,0110\,1001\,0110\,0110\,1001,
\end{equation}
obtained from the block $\u_5$ by inserting the letter 1 in the 13th position, and its negation $\v'$ obtained by inserting a $0$ in the same way into $\v_5$. Let $S$ be the set of all $\omega$-words that can be obtained from the Thue-Morse word $\u$ by replacing some of its $\theta^5$-blocks by the corresponding distorted blocks. Available places for inserting letters are shown below:
\begin{equation} \label{insert2}
\unitlength=0.8mm
\begin{picture}(126,17)(7,0)
\multiput(10.4,2)(90,0){2}{\makebox(0,0)[lb]{\small$\u_2\v_2\v_2\u_2\v_2\u_2\u_2\v_2$}}
\multiput(40.4,2)(30,0){2}{\makebox(0,0)[lb]{\small$\v_2\u_2\u_2\v_2\u_2\v_2\v_2\u_2$}}
\put(7,2.5){\makebox(0,0)[rb]{$\u=$}}
\put(132,2.5){\makebox(0,0)[lb]{\small$\ldots$}}
\put(10,0){\line(1,0){124}}
\put(55,16){\line(1,0){30}}
\multiput(10,0)(30,0){5}{\line(0,1){3}}
\multiput(55,16)(30,0){2}{\line(0,-1){3}}
\multiput(20.8,6.5)(30,0){4}{\makebox(0,0)[cb]{$\downarrow$}}
\put(65.8,6.5){\makebox(0,0)[cb]{$\downarrow$}}
\put(20.8,11.5){\makebox(0,0)[cb]{\small$1$}}
\put(65.8,11.5){\makebox(0,0)[cb]{\small$1$}}
\put(110.8,11.5){\makebox(0,0)[cb]{\small$1$}}
\put(50.8,11.5){\makebox(0,0)[cb]{\small$0$}}
\put(80.8,11.5){\makebox(0,0)[cb]{\small$0$}}
\multiput(25,-0.7)(90,0){2}{\makebox(0,0)[ct]{\small$\u_5$}}
\multiput(55,-0.5)(30,0){2}{\makebox(0,0)[ct]{\small$\v_5$}}
\put(70,16.5){\makebox(0,0)[cb]{\small$\u_5$}}
\end{picture}
\end{equation}
Let $\mathbf{z}\in S$. It is easy to check manually that $\mathbf{z}$ does not contain short cubes; as it will be shown below, $\mathbf z$ does not contain long overlaps, and hence has no cubes at all. Now, our goal is to prove that $\mathbf{z}$ avoids the pattern $\sf xyxyxx$.\\[4pt]
\emph{Claim}. Let $w=uvuvuu$ be a minimal forbidden word for $L$. Then $u\in\{0,1,01,10\}$.\\[4pt]
Assume that $|u|>1$. Then $u[i]\ne u[i{+}1]$ for all $i$ and, moreover, $u[|u|]\ne u[1]$. Indeed, otherwise $w[i...2|uv|{+}i{+}1]$ is an image of $\sf xyxyxx$, a contradiction with the minimality of $w$. Hence, $u\in\{(01)^s,(10)^s\}$. Since $uu$ is not forbidden, $s=1$. The claim is proved.

\smallskip
Let us consider the overlaps in $\mathbf{z}$. The case analysis below is performed up to negation. Each overlap surely contains at least one inserted letter. Two short overlaps can be easily observed inside the word $\u'$, see \eqref{insert}. They are $\u'[5...14]=1001100110$ and $\u'[11...18]=01101101$. These overlaps obviously avoid the pattern $\sf xyxyxx$. One can easily check that there is no other overlap of period $\le10$. Note that the words $0011001\pmb{1}$ and $1\pmb{1}011$ are not Thue-Morse factors and thus their occurrences in $\mathbf{z}$ indicate an inserted letter (the bold one).

Now assume to the contrary that some word $\mathbf{z}\in S$ meets the pattern $\sf xyxyxx$. Let $w=uvuvuu$ be the shortest word among the images of $\sf xyxyxx$ in all words $\mathbf{z}\in S$. We already know that $|uv|>10$. So, if $v$ contains an inserted letter then one of the corresponding ``indicators'' $0011001\pmb{1}$ and $1\pmb{1}011$ occurs inside $uvu$. Hence, the same letter was inserted in the other occurrence of $v$. If we delete both these inserted letters from $w$, we will get a shorter image of $\sf xyxyxx$, contradicting to the choice of $w$. Thus, $w$ contains inserted letters only inside $u$. Recall that $|u|\le2$ by the claim.

Assume that the letter 1 was inserted inside the second (middle) occurrence of $u$. Then if $|u|=1$, the Thue-Morse word contains the square $vv$. If $|u|=2$ then $u=10$, because the inserted letter is preceded by the same letter. So, the 0 in the first (left) occurrence of $u$ is not an inserted letter. Thus, $0v0v$ is a square in $\u$. Then the word $v$ or the word $0v$ should be either a $\theta^k$-block or a product of three alternating $\theta^k$-blocks (Lemma~\ref{TM},\,3). But $v$ ends with $01100110$, see \eqref{insert}, so we get a contradiction. 

Now note that if an inserted 1 is in the third (right) occurrence of $u$, then the corresponding indicator $0011001\pmb{1}$ occurs in the suffix $uvu$ of $w$. Hence $0011001\pmb{1}$ occurs in the prefix $uvu$ of $w$. Thus, 1 was also inserted inside the middle occurrence of $u$, which is impossible as we have shown already. So, the only remaining position for the inserted letter is in the left occurrence of $u$. Then $|u|=1$ (otherwise, $\u$ contains an overlap), and $vuvu$ is a factor of $\u$. But the inserted letter is preceded by the same letter, so $uvuvu$ must be a Thue-Morse factor. This contradiction finishes the proof of the fact that the word $\mathbf{z}$ avoids $\sf xyxyxx$.

\smallskip
Thus, we have proved that all finite factors of the word $\mathbf{z}$ belong to $L$. To finish the proof, we take a large enough number $n$ and consider all Thue-Morse factors of length $n$. For each factor, we perform the insertions of letters into $\theta^5$-blocks according to both \eqref{insert} and the negation of \eqref{insert}, in all possible combinations. Thus we obtain $2^k$ words from $L$, where $k$ stands for the number of $\theta^5$-blocks in the processed factor. Note that the words obtained from different factors are different (for instance, such words contain indicators in different positions). A Thue-Morse factor of length $n$ contains $n/32{+}O(1)$ ``regular'' $\theta^5$-blocks plus those $\theta^5$-blocks occurring on the border of two equal $\theta^5$-blocks, see \eqref{insert2}. Using Lemma~\ref{TM},\,2, we obtain the total of $n/24+O(1)$ blocks. Taking Lemma~\ref{TM},\,1 into account, we see that we constructed $\Theta(n)2^{n/24+O(1)}$ words from $L$, and the lengths of these words cover the interval of length $\Theta(n)$. Therefore, the growth rate of $L$ is at least $2^{1/24}$, as desired.
\end{proof}

\begin{thrm} \label{xxyxxy}
The number of binary cube-free words avoiding the pattern $\sf xxyxxy$ grows exponentially with the rate of at least $2^{1/24}\approx1.0293$.
\end{thrm}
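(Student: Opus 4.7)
The plan is to run the proof of Theorem~\ref{xyxyxx} again, with the pattern $\sf xxyxxy$ in place of $\sf xyxyxx$. The starting observation is the same: by the remark in Sect.~\ref{sect:avoid}, the Thue-Morse word $\u$ avoids $\sf xxyxxy$, so every Thue-Morse factor lies in the language $L$ of binary cube-free words avoiding $\sf xxyxxy$. I will promote the polynomial Thue-Morse family to an exponential family by inserting an extra letter into an arbitrary subset of the $\theta^5$-blocks of each Thue-Morse factor, while preserving membership in $L$.

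Concretely, I would search (by a very short computer exploration) for a position $p$ inside $\u_5$ at which inserting the letter $1$ yields a ``distorted block'' $\u'$ of length 33 with the following properties: (i) a short window around position $p$ inside $\u'$ is an ``indicator'' that does not appear anywhere in $\u$, so its presence in a word $\mathbf z \in S$ uniquely pinpoints an insertion; (ii) $\u'$ is cube-free and contains no image of $\sf xxyxxy$; and (iii) no word in the set $S$ obtained from $\u$ by replacing some $\theta^5$-blocks by their distorted counterparts $\u',\v'$ contains an overlap of period $\le 10$ other than the short overlaps already visible inside $\u'$ itself, which can be checked by hand to avoid $\sf xxyxxy$.

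The heart of the argument is a minimality claim analogous to the one for $\sf xyxyxx$: any minimal forbidden word $w = uuvuuv$ satisfies $u \in \{0,1,01,10\}$. If $|u| > 1$, the usual shift trick shows that an internal equality $u[i] = u[i+1]$ or the cyclic equality $u[|u|] = u[1]$ produces a strictly shorter image of $\sf xxyxxy$ inside $w$; hence $u$ is alternating with $u[|u|] \ne u[1]$, so $u = (01)^s$ or $(10)^s$, and $s \ge 2$ would make $uu$ contain the cube $(01)^3$, leaving $|u| \le 2$. With this in hand, the contradiction argument proceeds as in Theorem~\ref{xyxyxx}: taking a shortest image $uuvuuv$ of $\sf xxyxxy$ across all $\mathbf z \in S$, any insertion sitting inside $v$ would by property (i) force the same insertion in the other copy of $v$, whose simultaneous removal shortens $w$; so all insertions lie in the copies of $u$. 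Since $|u| \le 2$, only a handful of configurations remain, and each is ruled out by stripping the inserted letter(s) and invoking Lemma~\ref{TM},\,3 to derive a contradiction with the rigid structure of squares in $\u$.

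The counting step is identical to that in Theorem~\ref{xyxyxx}: a Thue-Morse factor of length $n$ contains $n/32 + O(1)$ regular $\theta^5$-blocks together with, by Lemma~\ref{TM},\,2, an additional $n/(3 \cdot 2^5) + O(1)$ blocks straddling pairs of equal adjacent $\theta^5$-blocks, for a total of $n/24 + O(1)$ positions where an insertion can be made independently. Lemma~\ref{TM},\,1 then produces $\Theta(n) \cdot 2^{n/24 + O(1)}$ distinct words of $L$ with lengths in an interval of size $\Theta(n)$, giving the growth rate $2^{1/24}$. The main obstacle is the calibration of the insertion position: it must be chosen so that the indicator is genuinely absent from $\u$ and so that each of the small leftover configurations with $|u| \le 2$ fails to produce a real image of $\sf xxyxxy$; as in Theorem~\ref{xyxyxx}, these checks are finite but somewhat delicate and will likely require a short computer verification.
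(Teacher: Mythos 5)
There is a genuine gap, and it sits exactly where you put ``the heart of the argument'': the minimality claim $u\in\{0,1,01,10\}$ does \emph{not} transfer from $\sf xyxyxx$ to $\sf xxyxxy$. The shift trick in the proof of Theorem~\ref{xyxyxx} works because $uvuvuu$ ends with an extra copy of $u$: the shifted image $a\,y'\,a\,y'\,a\,a$ has length $2|uv|+2<|w|$ and therefore fits strictly inside $w$. For $w=uuvuuv=(uuv)^2$ there is no trailing repetition to absorb the shift: an equality $u[i]=u[i+1]=a$ only yields occurrences of $aa$ at positions $i$, $|u|+i$, $|uuv|+i$, $|uuv|+|u|+i$, and any candidate image $aa\,y''\,aa\,y''$ of the same period would end at position $2|uuv|+i-1>|w|$. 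No shorter image is forced, and the claim itself is false: $w=(0110\cdot0110\cdot1)^2=011001101011001101$ is cube-free, its only squares have periods $1$, $2$, $4$, $9$, and its only image of $\sf xxyxxy$ is $w$ itself (with $u=0110$, $v=1$); hence $w$ is a minimal forbidden word with $|u|=4$. Everything you build on the claim (all insertions confined to the copies of $u$, then a handful of configurations with $|u|\le2$) therefore collapses.

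A secondary but also fatal issue is the choice of insertion. Inserting the single letter $1$ at the position used in Theorem~\ref{xyxyxx} produces $\ldots1001\,\pmb{1}\,0110\ldots$, which contains $110110$, an image of $\sf xxyxxy$ with ${\sf x}\to1$, ${\sf y}\to0$; so the distorted block of Theorem~\ref{xyxyxx} is already forbidden here, and your hoped-for ``very short computer exploration'' for a good single-letter position is not known to succeed. The paper instead inserts a $23$-letter word $s$ beginning with $01010$ into $\u_5$, precisely so that it can first prove that every word of $S$ contains \emph{no overlap other than} $01010$ and $10101$. That overlap classification is what replaces your false claim: writing $0w=0uuvuuv$, one of $0uu$, $1uu$ is an overlap and hence equals $01010$ or $10101$, which pins down $u$ to $01$ or $10$ and leaves only two cases to dispatch. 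Your counting step at the end is fine and matches the paper's, but the avoidance argument needs to be rebuilt along these lines (or some other control on overlaps) rather than by analogy with Theorem~\ref{xyxyxx}.
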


\begin{proof}
As in the proof of Theorem~\ref{xyxyxx}, we get an exponential lower bound using multiple insertions into the Thue-Morse word. But now we need to insert  rather long words, not just letters. Let $L$ be the language of all binary cube-free words avoiding $\sf xxyxxy$. Recall that $L$ contains all Thue-Morse factors. Consider the word
\begin{equation} \label{insert3}
\u'=0110\,1001\,1001\,0110\,1001\,0110\ \pmb{01010\ 01\,1001\,0110\,1001\,0110}\ 0110\,1001,
\end{equation}
obtained from the $\theta^5$-block $\u_5$ by inserting the marked factor $s$ of length 23 in the 25th position, and its negation $\v'$ obtained by inserting $\bar s$ in the 25th position of $\v_5$. One can check directly that both $\u'$ and $\v'$ are cube-free and avoid $\sf xxyxxy$. Note that $\u'[25...29]=01010$, but $\u'[1...28]$ is an overlap-free word ending with the square $\theta(100100)$, and $\u'[26...55]$ is a Thue-Morse factor. Let $S$ be the set of all $\omega$-words that can be obtained from the Thue-Morse word $\u$ by replacing some of its $\theta^5$-blocks by the corresponding blocks $\u'$, $\v'$. Let us consider a successive pair of inserted factors in $\mathbf{z}\in S$ (here $a,b\in\{0,1\}$):
\begin{equation} \label{insert4}
\unitlength=0.8mm
\begin{picture}(105,14)(5,0)
\put(2,1.5){\makebox(0,0)[rb]{\small$\mathbf{z}=$}}
\put(8,1.5){\makebox(0,0)[rb]{\small$\ldots$}}
\put(112,1.5){\makebox(0,0)[lb]{\small$\ldots$}}
\put(10,0){\line(1,0){100}}
\put(27.2,11){\line(1,0){65.6}}
\multiput(27.2,11)(65.6,0){2}{\line(0,-1){3}}
\put(60,11.5){\makebox(0,0)[cb]{\small overlap-free word}}
\put(27.2,6){\line(1,0){61}}
\multiput(27.2,6)(61,0){2}{\line(0,-1){3}}
\put(57.7,6.5){\makebox(0,0)[cb]{\small$w$}}
\put(25,0.5){\makebox(0,0)[lb]{\small$a\,\bar a\,a\,\bar a\,a$}}
\put(95,0.5){\makebox(0,0)[rb]{\small$b\,\bar b\,b\,\bar b\,b$}}
\end{picture}
\end{equation}
We see that $w$ is a Thue-Morse factor, $wb\bar{b}$ is an overlap-free word with the suffix $(\theta(\bar bbb))^2$. Moreover, assume for a moment that the left of the two considered insertions is withdrawn; then \emph{the factor $wb\bar{b}$ still would occur in the same place}.

Let us show that an $\omega$-word $\mathbf z\in S$ contains no overlap except for $01010$ and $10101$. Assume to the contrary that such overlaps exist. Consider the overlap $w=uvuvu$ which has the shortest period (among the overlaps in all $\mathbf z\in S$) and is not extendable (i.\,e., is not contained in a longer factor of $\mathbf z$ with the same period). In view of \eqref{insert4}, $w$ should contain the factor 10101 or 01010. We assume w.l.o.g. that $w$ contains 01010 and $w[i...i{+}4]=01010$ is the rightmost occurrence of this factor in $w$. This occurrence in certainly not inside the prefix $uvu$ of $w$. Suppose that this occurrence is inside the suffix $uvu$ of $w$. Then we have $w[i{-}|uv|...i{-}|uv|{+}4]=01010$. Both these occurrences of 01010 are prefixes of the occurrences of $s$ in $\mathbf z$. Since the leftmost of these occurrences of $s$ is obviously inside $w$, the rightmost one is also inside $w$ due to non-extendability of $w$. Moreover, non-extendability of $w$ implies that the rightmost occurrence of $s$ is not a suffix of $w$, because $s$ is always followed by $\v_3$. Now we can delete both mentioned occurrences of $s$ and get an overlap with a smaller period in contradiction with the choice of $w$. One case of mutual location of the factors of $w$ is depicted below, the others are quite similar. Deleting the occurrences of $s$ in the case presented in the picture gives the overlap $u_2v_1u_2v_1u_2$.
\begin{equation*} 
\unitlength=0.8mm
\begin{picture}(105,12)(5,-1)
\put(5,0.7){\makebox(0,0)[rb]{\small$w=$}}
\put(10,0){\line(1,0){100}}
\multiput(10,0)(45,0){3}{\line(0,1){3}}
\multiput(20,0)(45,0){3}{\line(0,1){3}}
\multiput(37,7)(45,0){2}{\line(1,0){23}}
\multiput(37,7)(45,0){2}{\line(0,-1){3}}
\multiput(60,7)(45,0){2}{\line(0,-1){3}}
\multiput(47.5,7.5)(45,0){2}{\makebox(0,0)[cb]{\small$s$}}
\multiput(12.7,0.5)(45,0){3}{\makebox(0,0)[cb]{\small$u_1$}}
\multiput(17.7,0.5)(45,0){3}{\makebox(0,0)[cb]{\small$u_2$}}
\multiput(29,0.5)(45,0){2}{\makebox(0,0)[cb]{\small$v_1$}}
\multiput(46.5,0.5)(45,0){2}{\makebox(0,0)[cb]{\small$v_2$}}
\end{picture}
\end{equation*}
Thus, it remains to consider the case when the rightmost occurrence of 01010 in $w=uvuvu$ strictly contains the middle $u$. Since $\u'$ contains no overlaps except for $01010$, we conclude that $|s|<|uv|$. Then the mutual location of the factors in $w$ looks like in the following picture.
\begin{equation*} 
\unitlength=0.8mm
\begin{picture}(100,10)(10,0)
\put(5,0.7){\makebox(0,0)[rb]{\small$w=$}}
\put(10,0){\line(1,0){100}}
\put(52,8){\line(1,0){38}}
\multiput(10,0)(47,0){3}{\line(0,1){3}}
\multiput(16,0)(47,0){3}{\line(0,1){3}}
\multiput(52,8)(16,0){2}{\line(0,-1){3}}
\put(90,8){\line(0,-1){3}}
\multiput(13,0.7)(47,0){3}{\makebox(0,0)[cb]{\small$u$}}
\multiput(19.5,0.5)(47,0){2}{\makebox(0,0)[cb]{\small$v_1$}}
\multiput(54.5,0.5)(47,0){2}{\makebox(0,0)[cb]{\small$v_4$}}
\multiput(32,0.7)(47,0){2}{\makebox(0,0)[cb]{\small$v_2$}}
\multiput(47.5,0.7)(47,0){2}{\makebox(0,0)[cb]{\small$v_3$}}
\put(60,7.5){\makebox(0,0)[ct]{\small$0\,1\,0\,1\,0$}}
\put(71,8.5){\makebox(0,0)[cb]{\small$s$}}
\end{picture}
\end{equation*}
The word $v_3$ begins and ends with 0, and $v_4$ also begins with 0, see \eqref{insert3}. Then the word $v_3v_3v_4$ begins with a shorter overlap, and this overlap contains at least five zeroes. Since the word $v_3v_3v_4$ occurs in an $\omega$-word from $S$, we get a contradiction with the minimality of the period of $w$. Thus, we have proved that the ``long'' overlap $w$ does not exist. Therefore, all $\omega$-words from $S$ contain no overlaps except for 01010 and 10101 and, in particular, are cube-free.

\smallskip
Now assume that $\mathbf z\in S$ contains an image of the pattern $\sf xxyxxy$, i.e., the factor $w=uuvuuv$ for some nonempty words $u$, $v$. W.l.o.g., this factor is preceded in $\mathbf z$ by 0. Since the word $0w$ is not an overlap, the word $v$ ends with 1. Then $0w$ contains both factors $0uu$ and $1uu$. But one of the words $0uu$, $1uu$ is an overlap, i.e., is equal to 01010 (resp., 10101). Let $v=v'1$ and consider both cases.

\emph{Case 1.} $0w=0\,0101v'1\,0101v'1$. By \eqref{insert3}, $v'$ ends with 1. Then the word $w$ is followed by 0. Hence, $w0$ is an overlap, which is impossible.

\emph{Case 2.} $0w=0\,1010v'1\,1010v'1$. There is no factor 01010 or 10101 on the border between the left and the right $uuv$. Hence, the factors $s$ and $\bar s$ in $w$, if any, are inside $uuv$ (recall that $w$ is not extendable to the right, because $\mathbf z$ has no long overlaps). Therefore, after deleting all occurrences of $s$ and $\bar s$ in $w$, we will still have a square of the form $(1010...)^2$. But the Thue-Morse word has no such squares, see Lemma~\ref{TM},\,3. This contradiction proves that the $\omega$-word $\mathbf z$ avoids the pattern $\sf xxyxxy$.

\smallskip
It remains to estimate the total number of factors in all words $\mathbf z$. Repeating the argument from the proof of Theorem~\ref{xyxyxx}, we arrive at the same bound $2^{1/24}$.
\end{proof}

\begin{thrm} \label{xxyyxx}
The number of binary cube-free words avoiding the pattern $\sf xxyyxx$ grows exponentially with the rate of at least $2^{1/18}\approx1.0392$.
\end{thrm}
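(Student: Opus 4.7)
The strategy parallels Theorems~\ref{xyxyxx} and~\ref{xxyxxy}, but with the Thue--Morse word replaced by the D0L-word $\mathbf{m}=\mu^{\infty}(0)$ of Theorem~\ref{avoid1}: by Theorem~\ref{avoid1}, $\mathbf{m}$ is already cube-free and $\sf xxyyxx$-avoiding, so what remains is to produce an exponentially rich family of modifications of $\mathbf{m}$ that still lie in the target language $L$ of binary cube-free words avoiding $\sf xxyyxx$.

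My plan is to work at the scale of $\mu^2$-blocks (length $9$) and to exhibit two length-preserving ``distorted'' blocks $\mu^2(0)'$ and $\mu^2(1)'$ obtained from $\mu^2(0)=010011010$ and $\mu^2(1)=010011011$ by a small interior letter swap, chosen to respect negation and to share a long common boundary with the original block so that neighbouring undistorted blocks still fit. I would then let $S$ be the set of all $\omega$-words obtained from $\mathbf{m}$ by independently deciding, for each occurrence of a $\mu^2(0)$-block, whether to replace it with $\mu^2(0)'$; $\mu^2(1)$-blocks are left untouched. Since $0$ has density $1/2$ in $\mathbf{m}$ and $\mu^2$-blocks have length $9$, this yields a binary choice at each of $n/18+O(1)$ sites in any factor of length $n$.

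The main obstacle is showing that every $\mathbf{z}\in S$ is cube-free and avoids $\sf xxyyxx$. I would argue by contradiction: pick the shortest forbidden factor $F$ (a cube $u^3$ or an image $uuvvuu$) across all $\mathbf{z}\in S$. Short periods would be ruled out by direct case analysis on $F$ and its immediate context, mirroring the short claim at the start of the proof of Theorem~\ref{xyxyxx}. For long periods, the synchronization statements of Lemma~\ref{muprop}\,(3)--(5) would force $|u|$ (and, if applicable, $|v|$) to be divisible by $9$, so that $F$ aligns with $\mu^2$-block boundaries. Alignment then forces the repeated $u$-parts of $F$ to carry matching distortion patterns, and removing a matched pair of distortions produces a shorter forbidden factor in some $\mathbf{z}'\in S$, contradicting minimality; alternatively, deleting all distortions in $F$ descends to a forbidden factor of $\mathbf{m}$ itself, contradicting Theorem~\ref{avoid1}. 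Either outcome yields the desired contradiction.

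Once avoidance is in place, the count mirrors the endgame of Theorem~\ref{xyxyxx}. The $n/18+O(1)$ distortion sites in any length-$n$ factor of $\mathbf{m}$ produce $2^{n/18+O(1)}$ distinct length-$n$ words in $L$ (the distorted block is designed to leave a detectable ``indicator'', so distinct choices give distinct words); combining this with the $\Theta(n)$ distinct length-$n$ factors of $\mathbf{m}$, we obtain $\Theta(n)\cdot 2^{n/18+O(1)}$ words of length $n$ in $L$, and hence $\IR(L)\ge 2^{1/18}$.
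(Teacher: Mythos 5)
Your high-level plan (modify $\mathbf{m}$ at $n/18+O(1)$ sites per length-$n$ factor, show the results stay in $L$, then count) is the same as the paper's, and the site count is right: $\mathbf{m}$ splits into $\mu^2$-blocks of length $9$, half of which are $\mu^2(0)$. But the proposal has a genuine gap at its core: the entire substance of this theorem is the concrete modification and the verification that it preserves avoidance, and you supply neither. You never exhibit the ``distorted'' block $\mu^2(0)'$, and it is far from clear that a length-preserving interior swap of $010011010$ exists that works in all contexts (for instance, swapping the two middle letters produces $010101010$, which already contains a cube). The paper does something different and easier to control: it \emph{inserts} the factor $01010$ after some $\mu^2(0)$-blocks. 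This choice is not cosmetic. Because $0101$ is not a factor of $\mathbf{m}$ (Lemma~\ref{muprop},\,1), the inserted word is an unambiguous marker, and because the insertion genuinely lengthens the word, deleting matched insertions from a hypothetical forbidden factor $uuu$ or $uuvvuu$ yields a strictly \emph{shorter} forbidden factor, feeding a minimal-counterexample argument by length. With your length-preserving swap, undoing a distortion does not shorten anything, so your own descent step (``removing a matched pair of distortions produces a shorter forbidden factor'') contradicts your choice of $F$ as the \emph{shortest} forbidden factor; you would need a different well-ordering (e.g., number of distortions), and you do not set one up.

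There is a second, related gap: you invoke Lemma~\ref{muprop},\,(3)--(5) to force $9\mid|u|$ for long periods, but that lemma is a statement about factors of $\mathbf{m}$, not about factors of the modified words $\mathbf{z}\in S$. Once blocks are distorted, a long factor of $\mathbf{z}$ need not be a factor of $\mathbf{m}$, so the synchronization properties do not apply directly. The paper's proof spends most of its length precisely on this point: it first shows that all insertions inside a minimal forbidden factor must sit on the borders of the repeated parts, and only then, after a careful case analysis (Cases~1--4 with subcases, tracking positions modulo $9$), reduces to a factor of $\mathbf{m}$ where Lemma~\ref{muprop} can be applied. Your sketch compresses all of this into ``alignment then forces the repeated $u$-parts to carry matching distortion patterns,'' which is exactly the claim that needs proof and is where counterexamples would hide. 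The counting endgame you describe is fine, but without a concrete construction and the avoidance verification, the proof is not there.
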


\begin{proof}
Proving this lower bound, we cannot rely on the Thue-Morse word, because it meets the pattern $\sf xxyyxx$. Instead, we apply the insertion technique to the D0L-word $\mathbf{m}$ generated by the morphism $\mu$, introduced in Sect.~\ref{sect:avoid}. The word $\mathbf m$ is a product of $\mu$-blocks, as well as of $\mu^2$-blocks, and has no other occurrences of such blocks by Lemma~\ref{muprop},\,3. Consider the word
\begin{equation}\label{insert01010}
\mathbf{m}'=010\,011\,010\ \pmb{01010}
\end{equation}
obtained by attaching the factor $01010$ to the block $\mu^2(0)$. Let $S$ be the set of all $\omega$-words obtained from $\mathbf{m}$ by replacing some of  the blocks $\mu^2(0)$ by the words $\mathbf{m}'$ (in other words, by inserting the factor 01010 after some blocks $\mu^2(0)$). 
\begin{itemize}
\item[($\vartriangle$)] If one inserts 01010 after $u$ in a word $u\mathbf{v}\in S$, then $u$ is followed by 010 [resp., $\mathbf v$ is preceded by 1010] both before and after insertion.
\end{itemize}
Note that $0101$ is not a factor of $\mathbf{m}$ by Lemma~\ref{muprop},\,1, and hence we use this word as a ``marker''. Let us show that $S$ avoids $\{{\sf xxx, xxyyxx}\}$. Assume to the contrary that some $\mathbf{z}\in S$ has a forbidden factor, and $w$ is the shortest one among all forbidden factors of all words $\mathbf{z}\in S$. Since $w$ is not a factor of $\mathbf{m}$, it contains at least one marker $0101$.


The word $w$ equals either to $uuu$ or to $uuvvuu$, for some words $u,v$. If $u$ or $v$ contains the factor $01010$, then one can cancel the corresponding insertions inside each occurrence of this word, thus getting a shorter forbidden factor in contradiction with the choice of $w$.
Thus, all inserted factors inside $w$ are on the borders of its parts.

Let $w=uuu$. Using the fact that $\mathbf{m}'$ is always followed by a $\mu^2$-block, it is easy to check that $|u|\ge5$. If $uu$ contains $01010$ somewhere in the middle, then $01010=z_1z_2$ and $u=z_2u'z_1$. Hence, after cancelling the two insertions inside $uuu$, one obtains a shorter cube $u'u'u'$, a contradiction. Finally, if $uuu$ ends with $0101$, then this marker is a suffix of $u$, and we get the previous case. Thus, $w$ has no markers, a contradiction.

\smallskip
Now let $w=uuvvuu$. First consider the case where either $u$ or $v$ lies strictly inside some factor $01010$ (and hence, is equal to 01 or 10). If $u=01$, then $vv=0z$, where $z$ is either a product of $\mu^2$-blocks or such a product with $01010$ inserted in the middle. In the first case $0z$ is a factor of $\mathbf{m}$ and hence is not a square by Lemma~\ref{muprop},\,4. In the second case, the right half of $0z$ cannot begin with $00$ as $0z$ itself does; once again we see that $0z$ is not a square. The case $u=10$ and $vv=z0$ is symmetric to the above one.

If $v=01$ [$v=10$], then $u$ begins with 00 [resp., 0] and ends with 0 [resp., 00], implying that $uu$ contains the cube $000$, which is impossible. 

Thus, the factors $01010$ can be found inside $w$ only in the following places:
\begin{equation*} 
\unitlength=0.8mm
\begin{picture}(120,8)(10,0)
\gasset{AHnb=0}
\put(5,0.7){\makebox(0,0)[rb]{\small$w=$}}
\put(10,0){\line(1,0){120}}
\multiput(10,0)(20,0){7}{\line(0,1){3}}
\drawline[dash={0.7 0.7}{0.3}](26,5)(26,8)(34,8)(34,5)
\drawline[dash={0.7 0.7}{0.3}](46,5)(46,8)(54,8)(54,5)
\drawline[dash={0.7 0.7}{0.3}](66,5)(66,8)(74,8)(74,5)
\drawline[dash={0.7 0.7}{0.3}](86,5)(86,8)(94,8)(94,5)
\drawline[dash={0.7 0.7}{0.3}](106,5)(106,8)(114,8)(114,5)
\multiput(20,0.7)(20,0){2}{\makebox(0,0)[cb]{\small$u$}}
\multiput(60,0.7)(20,0){2}{\makebox(0,0)[cb]{\small$v$}}
\multiput(100,0.7)(20,0){2}{\makebox(0,0)[cb]{\small$u$}}
\end{picture}
\end{equation*}
(In addition, $w$ can have the suffix $0101$; in case of any other partial intersection of $01010$ and $w$, the deletion of this occurrence of 01010 from $\mathbf{z}$ leaves $w$ unchanged by ($\vartriangle$).)

Consider any square in $\mathbf{z}$ containing $01010$ in the middle. Such a square $xx$ can be written in the form $z_2x'z_1\,z_2x'z_1$, where $z_1z_2=01010$. Then $x'$ is a square in $\mathbf{m}$, and thus $|x'|$ equals $3^k$ or $2\cdot3^k$ for some $k\ge0$ by Lemma~\ref{muprop},\,4. One can easily see that trying $|x'|=1,2,3,6$, it is impossible to obtain both squares $x'x'$ and $xx$. Hence, $x'$ must be a product of $\mu^2$-blocks ending with the block $\mu^2(0)$. Now we proceed with the case analysis.

\emph{Case 1}: both $uu$ and $vv$ contain $01010$ in the middle. Then
$$
w=z_2u'z_1\,z_2u'z_1\,z_4v'z_3\,z_4v'z_3\,z_2u'z_1\,z_2u'z_1, \text{ where } z_1z_2=z_3z_4=01010\,.
$$
Since $u'$ and $v'$ are products of $\mu^2$-blocks, we have $z_1z_4=z_3z_2=01010$. Hence, $u'u'v'v'u'u'$ is a factor of $\mathbf{m}$, a contradiction.

\emph{Case 2}: $uu$ contains $01010$, while $vv$ not. Then $w=z_2u'z_1z_2u'z_1\,vv\,z_2u'z_1z_2u'z_1$ and $u'$ is a product of $\mu^2$-blocks. Four subcases are possible depending on the existence of insertions on the borders of $u$ and $v$.

\emph{Case 2.1}: no insertions. Then $z_1vvz_2$ is a product of $\mu^2$-blocks, implying $|vv|\equiv4\pmod9$. By Lemma~\ref{muprop},\,4, $v=10$ and $z_1vvz_2=011\,010\,010$. But this is not a $\mu^2$-block, a contradiction.

\emph{Case 2.2}: an insertion only on the left. Then $v=z_2v'$. Let $\bar{v}=v'z_2$. Deleting all three insertions of $z_1z_2=01010$ from $w=z_2u'z_1z_2u'z_1\,z_2v'z_2v'\,z_2u'z_1z_2u'z_1$, one discovers the forbidden factor $u'u'\bar{v}\bar{v}u'u'$, contradicting the minimality of $w$. 

\emph{Case 2.3}: an insertion only on the right, is symmetric to Case 2.2. 

\emph{Case 2.4}: insertions on both sides. Then $v=z_2v'=v''z_1$, where $v'v''$ is a product of $\mu^2$-blocks. Hence $|vv|\equiv5\pmod9$, which is impossible by Lemma~\ref{muprop},\,4.

\emph{Case 3}: $vv$ contains $01010$, while $uu$ not. Note that in this case $u$ cannot have the suffix $0101$. Then 
\begin{itemize}
\item $w=uu\,z_2v'z_1z_2v'z_1\,uu$;
\item $v'$ is a product of $\mu^2$-blocks, ending with $\mu^2(0)$ (in particular, $v'=010\cdots1010$);
\item $uu$ is a factor of $\mathbf{m}$ (in particular, $u$ has no factor 0101).
\end{itemize}
If $|u|=1$, then either the first letter of $z_2$ or the last letter of $z_1$ equals $u$, implying that $w$ contains a cube of a letter, which is impossible. The assumption $|u|=2$ (i.\,e., $u=10$) also leads to a contradiction for all values of $z_1$. Namely, if $z_1$ ends with 0, then $uuz_2$ begins with $(10)^3$; if $z_1=01$, then $z_1uu=011010$ is not a valid beginning of a $\mu^2$-block; finally, $z_1=0101$ must be followed by 0, not by 1. Thus, $|u|\equiv0\pmod3$. Let us analyze the possible values of $z_1$.

\emph{Case 3.1}: $z_1=0101$, $v=0v'0101$, $w=uu0v'01010v'0101uu$. Since $u$ cannot end with 0101, $w$ has exactly two occurrences of $01010$ ($u[1]=0$). Let us put $v'=v''0$, $\bar{v}=0v''$. Deleting both occurrences of $01010$, we obtain the forbidden word $uu\bar{v}\bar{v}uu$ which is shorter than $w$, a contradiction.

\emph{Case 3.2}: $z_1=010$, $v=10v'010$, $w=uu10v'01010v'010uu$. If there no factor $01010$ on the left border of $vv$, then $uu$ ends in $\mathbf{m}$ in the position equal to 7 modulo 9. If this factor appears there, then $uu$ ends in $\mathbf{m}$ in the position equal to 3 modulo 9. Similarly, if there is the factor [resp., no factor] $01010$ on the right border of $vv$, then $uu$ begins in the position equal to 8 modulo 9 [resp., to 4 modulo 9]. Since $|u|=0\pmod3$, exactly one factor $01010$ should occur at the borders of $vv$. If this factor is on the left, we put $v'=010v''$. Then deleting both factors $01010$ we obtain a shorter forbidden factor $uuv''010v''010uu$ to get a contradiction (observe that the deleted suffix 010 of the second $u$ is replaced by the prefix $010$ of $v'$). Similarly, if the factor is on the right, we put $v'=v''10$ to obtain, after the deletion, a shorter forbidden factor $uu10v''10v''uu$.

For \emph{Case 3.3}: $z_1=01$ and \emph{Case 3.4}: $z_1=0$, the same analysis as in Case 3.2 works.

\emph{Case 4}: neither $uu$ nor $vv$ contains $01010$ in the middle. Then both $uu$ and $vv$ are factors of $\mathbf{m}$. We obtain contradictions between the length of $uu$ and its starting and ending positions in $\mathbf{m}$. 

\emph{Case 4.1}: $01010$ was inserted at the left border of $vv$. Since 01010 is followed by 010011, we see that either $v=010$ or $|v|\equiv0\pmod9$ by Lemma~\ref{muprop},\,4. In the first case, the starting position of $uu$ equals 4 modulo 9, and its ending position equals 2 modulo 9, contradicting Lemma~\ref{muprop},\,4. If $|v|\equiv0\pmod9$, let the starting position of $vv$ be equal to $k$ modulo 9. Then the ending position of $uu$ equals $k{+}4$ modulo 9, while its starting position equals either $k{-}5$ or $k$ modulo 9, depending on the existence of the factor 01010 at the right border of $vv$. In both cases, we have a contradiction with Lemma~\ref{muprop},\,4. 

\emph{Case 4.2}: $01010$ was inserted only at the right border of $vv$. Similar to Case~4.1, we analyze the possible lengths of $v$ ($|v|=1$, $|v|=6$, and $|v|\equiv0\pmod9$), obtaining that the length of $u$ cannot satisfy Lemma~\ref{muprop},\,4.

\smallskip
Thus, we finished the case study, obtaining contradictions in all cases. Hence, the forbidden word $w$ does not exist, and the set $S$ avoids $\{{\sf xxx, xxyyxx}\}$. Finally, we estimate the total number of factors in all words $\mathbf z\in S$, similar to the proof of Theorem~\ref{xyxyxx}. The word $\mathbf{m}$ has $\Theta(n)$ factors of length $n$; this follows, e.\,g., from Pansiot's classification theorem (see \cite{ChKa97}). It is clear that such a factor contains $n/2+O(1)$ zeroes and then, $n/18+O(1)$ factors $\mu^2(0)$. The latter quantity coincides with the number of places for insertions of the factor $01010$. Thus, from the factors of $\mathbf{m}$ of length $n$ we can construct $\Theta(n)2^{n/18+O(1)}$ factors of words from $S$. The lengths of these factors cover the interval of length $\Theta(n)$. Therefore, the growth rate of the binary language avoiding $\{{\sf xxx, xxyyxx}\}$ is at least $2^{1/18}$, as required.
\end{proof}

\subsection{Mapping ternary square-free words}

In this section, we explore another approach for getting lower bounds. Namely, the fact that the language of ternary square-free words has exponential growth leads to the following simple observation. 

\begin{bsrvtn}
If an $n$-uniform morphism $f:\{0,1,2\}^*\to\{0,1\}^*$ transforms any square-free ternary word to a binary word avoiding $\{{\sf xxx},P\}$, then the number of such binary words grows exponentially at rate at least $\alpha^{1/n}$, where $\alpha$ is the growth rate of the language of ternary square-free words.
\end{bsrvtn}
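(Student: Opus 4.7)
The plan is a straightforward counting argument, whose only subtlety is showing that the morphism $f$ is injective on words, so that distinct square-free ternary words produce distinct binary forbidden-pattern-avoiding words of known length.

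First I would verify that $f$ is injective on letters. Indeed, if $f(a)=f(b)$ for two distinct letters $a,b\in\{0,1,2\}$, then the ternary word $aba$ is square-free, whereas $f(aba)=f(a)f(b)f(a)=f(a)^3$ is a cube, contradicting the hypothesis that $f$ sends square-free ternary words to cube-free binary words. Since $f$ is $n$-uniform, injectivity on letters immediately upgrades to injectivity on words: two ternary words of the same length whose images agree must agree block-by-block on the blocks of length $n$, hence letter-by-letter.

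Next I would transfer the counting. Let $T$ denote the language of ternary square-free words and let $L$ denote the binary language avoiding $\{{\sf xxx},P\}$. By hypothesis $f(T)\subseteq L$, and by injectivity $f$ maps the $C_T(m)$ square-free ternary words of length $m$ to $C_T(m)$ distinct binary words in $L$, each of length $nm$. Therefore $C_L(nm)\ge C_T(m)$ for every $m\ge1$.

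Finally, taking $n$th roots and passing to the limit,
\[
\IR(L)=\lim_{N\to\infty}C_L(N)^{1/N}\ge\lim_{m\to\infty}C_L(nm)^{1/(nm)}\ge\lim_{m\to\infty}\bigl(C_T(m)^{1/m}\bigr)^{1/n}=\alpha^{1/n},
\]
which is the claimed lower bound. No step poses a real obstacle here; the only thing one must be careful about is the injectivity of $f$, but as noted above it is forced by cube-freeness alone, so the hypothesis on $P$ is not even needed for this part.
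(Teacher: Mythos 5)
Your argument is correct and is exactly the intended one: the paper states this as a ``simple observation'' without proof, and the standard justification is precisely your injectivity-plus-counting argument (injectivity of $f$ forced by cube-freeness, $C_L(nm)\ge C_T(m)$, then pass to the limit using the existence of the growth rate for factorial languages). Your side remark that the hypothesis on $P$ is not needed for injectivity is also accurate.
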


The morphisms with the desired properties can be obtained using the method described in~\cite{Och06}. The number $\alpha$ is known with a quite high precision: $1.3017597<\alpha<1.3017619$ (cf. \cite{Sh12rev}).

\begin{thrm} \label{other}
The number of binary cube-free words avoiding the pattern $P$, where $P\in\{\sf xxyxyy,xxyyxyx,xyxxyxy,xyxxyyxy \}$, grows exponentially with the rate of at least
\begin{itemize}
\item $\alpha^{1/14}\approx1.0190$ for $P=\sf xxyxyy$,
\item $\alpha^{1/13}\approx1.0205$ for $P=\sf xxyyxyx,xyxxyxy$,
\item $\alpha^{1/10}\approx1.0267$ for $P=\sf xyxxyyxy$.
\end{itemize}
\end{thrm}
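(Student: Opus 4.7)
The plan is to invoke the Observation that precedes the theorem: for each pattern $P$ in the list, it suffices to exhibit an $n$-uniform morphism $f_P:\{0,1,2\}^*\to\{0,1\}^*$, with $n=14$ for $P={\sf xxyxyy}$, $n=13$ for $P\in\{{\sf xxyyxyx,xyxxyxy}\}$, and $n=10$ for $P={\sf xyxxyyxy}$, such that $f_P$ sends every ternary square-free word to a cube-free binary word avoiding $P$. Since the three blocks $f_P(0),f_P(1),f_P(2)$ will be chosen pairwise distinct, $f_P$ is injective, so the images of the $\Omega(\alpha^m)$ square-free ternary words of length $m$ yield that many pairwise distinct binary words of length $nm$ in the avoiding language, and the announced lower bound $\alpha^{1/n}$ on the growth rate follows immediately from $\alpha>1.3017597$.

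To produce the morphisms I would do a computer search over candidate triples $(f_P(0),f_P(1),f_P(2))$ of binary words of the prescribed uniform length, pruning with the standard necessary tests following Ochem's method~\cite{Och06}: injectivity of the three blocks, cube-freeness of the image of some fixed long square-free ternary seed (in the spirit of Lemma~\ref{3free}), and avoidance of $P$ in that image. At the targeted lengths $n\in\{10,13,14\}$ such searches are known to produce a handful of surviving triples, and one of them is recorded for each $P$.

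The remaining step is to verify that a surviving $f_P$ really has the required property on \emph{all} ternary square-free inputs. The argument is the standard reduction to a finite check: for any uniform morphism with pairwise distinct blocks, occurrences of a sufficiently long factor in any $f_P(w)$ can be aligned modulo $n$ (as in the proof of Lemma~\ref{synch}), and so any putative cube or image of $P$ in $f_P(w)$ either has small period or can be pulled back to a cube or image of $P$ in a shorter ternary square-free word. Iterating, one reduces the universal statement to checking the images of all square-free ternary words up to some explicit length $L=L(n,P)$; for our values of $n$ and $P$, this $L$ is small enough for a routine computer enumeration.

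The only genuine obstacle is the search itself: the space of $n$-uniform morphisms has size of order $8^n$, and the per-morphism verification cost also grows with $n$, so that obtaining the smallest possible $n$ in each case (and hence the best exponent $1/n$) requires reasonably careful pruning. The different values of $n$ appearing in the statement simply reflect how restrictive each $P$ turns out to be: the short pattern ${\sf xxyxyy}$ forces $n=14$, while the longer ${\sf xyxxyyxy}$ already admits a $10$-uniform solution. Once the concrete morphisms have been exhibited and checked, the theorem is an immediate application of the Observation.
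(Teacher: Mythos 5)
Your high-level strategy matches the paper's: exhibit uniform morphisms $g:\{0,1,2\}^*\to\{0,1\}^*$ of lengths $14$, $13$, $13$, $10$ sending ternary square-free words into the avoiding languages, and then invoke the Observation. But there are two genuine gaps. First, the entire content of the theorem is the existence of morphisms at \emph{exactly} these lengths (they determine the exponents $1/14$, $1/13$, $1/10$); saying that a search ``is known to produce a handful of surviving triples'' is not a proof, and the paper does exhibit the four morphisms $g_1,\dots,g_4$ explicitly. Second, and more substantively, your reduction of the verification to a finite check does not work as stated. You assert that ``any uniform morphism with pairwise distinct blocks'' has the synchronization property --- this is false in general; synchronization is a property of the specific blocks and must be verified (the paper checks that each $g_i$ is $2n_i$-synchronizing by showing $g_i(a)$ occurs in $g_i(bc)$ only as a prefix or suffix). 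Worse, your ``pull back \dots\ Iterating'' step cannot be carried out: the morphism maps a ternary alphabet to a binary one, so it cannot be iterated, and the paper explicitly flags this (``Since we cannot iterate morphisms acting on alphabets of different sizes, here we need a different argument''). A pullback of a long cube or $P$-image lands in the ternary word $w$ itself, and the contradiction there must come from the fact that $w$ is square-free \emph{and} that every pattern in the list contains $\sf xx$, so that any image of $P$ contains a square. That structural observation (``both variables are involved in a square'') is the linchpin of the paper's argument and is missing from yours.

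The paper's actual route is worth noting because it sidesteps the per-pattern synchronization analysis of $P$-occurrences entirely: it proves the stronger claim that $g_i(w)$ avoids all squares $uu$ with $|u|\ge t$ for a small threshold $t$ (using synchronization plus the $avbvc$ factorization argument to rule out long squares, and a finite check for short ones). Once all squares in $g_i(w)$ have bounded length, the fact that both variables of $P$ sit inside a square of $P$ immediately bounds the length of any cube or $P$-occurrence, so a single exhaustive check over images of short square-free words finishes the proof. This also yields the simultaneous avoidance of large squares exploited in Theorem~\ref{largesq}. If you want to salvage your version, you would need to (i) record the concrete morphisms, (ii) verify synchronization for each of them, and (iii) replace ``iterate'' by a single pullback into $w$ followed by the square-in-$P$ observation.
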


\begin{proof}
In the proof of Theorem~\ref{avoid2} we used morphic preimages to reduce the proof of pattern avoidance to the exhaustive search of forbidden factors in short words. Since we cannot iterate morphisms acting on alphabets of different sizes, here we need a different argument for such a reduction. For this purpose, we construct binary words avoiding simultaneously cubes, the pattern $P$, and large squares. We use the notation $S_t$ for the $t$-ary pattern $({\sf x}_1\cdots{\sf x}_t)^2$. 

Consider the morphisms $g_1$, $g_2$, $g_3$, and $g_4$ such that
\begin{equation*}
\arraycolsep=4mm
\begin{array}{ll}
g_1(0)=01011001100101&g_2(0)=0100110011011\\
g_1(1)=00110110010011&g_2(1)=0100101101001\\
g_1(2)=00101001101011&g_2(2)=0011011001001\\[5pt]
g_3(0)=0010110110011&g_4(0)=0101100110\\
g_3(1)=0010110011011&g_4(1)=0101001011\\
g_3(2)=0010011010011&g_4(2)=0100110010.
\end{array}
\end{equation*}

For any square-free word $w\in\{0,1,2\}^*$ we claim that
\begin{itemize}
\item the word $g_1(w)$ avoids $\{{\sf xxx,xxyxyy},S_8\}$;
\item the word $g_2(w)$ avoids $\{{\sf xxx,xxyyxyx},S_9\}$;
\item the word $g_3(w)$ avoids $\{{\sf xxx,xyxxyxy},S_{10}\}$;
\item the word $g_4(w)$ avoids $\{{\sf xxx,xyxxyyxy},S_8\}$.
\end{itemize}

To prove this claim, we notice that for every binary pattern $P$ considered in this section, both variables $\sf x$ and $\sf y$ are involved in a square. This implies that in a word containing only squares of bounded length, potential occurrences of $P$ and of cubes have bounded length as well. So we can check exhaustively that $g_i(w)$ avoids cubes and $P$ for all short square-free words $w$. Let a \emph{large square} be an occurrence of $S_t$. There remains to prove that if $w$ is square-free, then $g_i(w)$ does not contain large squares. The proof is the same for all four morphisms.

Let $n_i=|g_i(a)|$, $a\in\{0,1,2\}$. First we check that the morphism $g_i$ is $2n_i$-syn\-chronizing. Indeed, any factor of $g_i(w)$ of length $2n_i$ contains a  $g_i$-image of some letter $a$; but it is easy to see that for any letters $a,b,c\in\{0,1,2\}$, the factor $g_i(a)$ only appears in $g_i(bc)$ as a prefix or as a suffix. Then we check that no large square appears in the $g_i$-image of a ternary square-free word of length 5. So, a potential large square $uu$ in $g_i(w)$ is such that $|u|>2n_i$ and thus $|u|=qn_i$ for some integer $q\ge 3$ by the synchronizing property. So $uu$ is contained in the image of a word of the form $w=avbvc$ with $a,b,c\in\Sigma_3$ and the center of $uu$ lies in $g_i(b)$. Moreover, $a\ne b$ and $b\ne c$ since $w$ is square-free. This implies that $abc$ is square-free and that $g_i(abc)$ contains a square $u'u'$ with $|u'|=n_i$. Now $u'u'$ is a large square because $n_i>t$ for all our morphisms $g_i$. This is a contradiction since no large square appears in the $g_i$-image of a ternary square-free word of length 5. The claim, and then the theorem, is proved.
\end{proof}

Proving Theorem~\ref{other}, we actually showed that the considered binary patterns can be avoided by binary cube-free words simultaneously with large squares. So, a natural problem is to find the exact bound for the length of these large squares. The following theorem gives this bound for all patterns listed in \eqref{mainlist}.

\begin{thrm} \label{largesq}
Let $P\in\{\sf xxyxyx,xxyxxy,xxyxyy,xxyyxx,xxyyxyx,xyxxyxy,xyxxyyxy\}$ and let $t(P)$ be the number such that the set of patterns $\{{\sf xxx},P,S_{t(P)}\}$ is 2-avoidable while the set $\{{\sf xxx},P,S_{t(P)-1}\}$ is 2-unavoidable. Then
$$
t(P)=\begin{cases}
4& \text{if } P=\sf xxyyxx,\\
5& \text{if } P\in\{\sf xxyxyy,xxyyxyx,xyxxyxy,xyxxyyxy\},\\
7& \text{if } P\in\{\sf xxyxxy, xxyxyx\}.
\end{cases}
$$
and the binary language avoiding $\{{\sf xxx},P,S_{t(P)}\}$ has exponential growth.
\end{thrm}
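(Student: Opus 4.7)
The plan is to split the theorem into two tasks: (i) showing that the set $\{{\sf xxx}, P, S_{t(P)-1}\}$ is 2-unavoidable for each listed pattern $P$, and (ii) exhibiting a binary language avoiding $\{{\sf xxx}, P, S_{t(P)}\}$ that has exponential growth. Together these pin down the value of $t(P)$ and establish the growth claim.

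For task (i), the natural approach is a finite exhaustive search. For each pattern $P$ and threshold $t(P){-}1$, I would attempt to build binary words avoiding $\{{\sf xxx}, P, S_{t(P)-1}\}$ letter by letter; by König's lemma, the claimed unavoidability is equivalent to this backtracking tree being finite, and a short computer program can bound its depth by some $N(P)$. Presenting longest-avoider witnesses in the style of Table~\ref{lengths} would make each case concrete and easy to verify by hand.

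For task (ii), the approach mirrors the construction in Section~\ref{sect:low}.2: for each $P$ one constructs an $n_P$-uniform morphism $f_P:\{0,1,2\}^*\to\{0,1\}^*$ such that the image under $f_P$ of any ternary square-free word avoids $\{{\sf xxx},P,S_{t(P)}\}$. Unlike the morphisms $g_1,\ldots,g_4$ of Theorem~\ref{other}, which only guarantee avoidance of $S_8$, $S_9$, $S_{10}$, we must search for sharper morphisms hitting the tight thresholds $4$, $5$, $7$. Once candidate morphisms are found, their correctness is verified by (a) checking that $f_P$ is $kn_P$-synchronizing for some small $k$, (b) checking cube-freeness via Lemma~\ref{3free}, and (c) verifying on all $f_P$-images of ternary square-free words up to some bounded length (around five) that no occurrence of ${\sf xxx}$, $P$, or $S_{t(P)}$ appears; the synchronizing property together with the bounded length of potential occurrences of $P$ (which involves a square) lifts this finite check to all infinite square-free inputs. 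The exponential growth of the resulting binary language then follows immediately from the Observation preceding Theorem~\ref{other}: the ternary square-free language grows at rate $\alpha>1.3017$, so the constructed avoiding language grows at rate at least $\alpha^{1/n_P}>1$.

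The main obstacle is finding morphisms with small enough $n_P$ for the tight values of $t(P)$, particularly for $P={\sf xxyyxx}$ with $t(P)=4$ (where essentially only the shortest squares are permitted on top of cubes and $\sf xxyyxx$), and for $P\in\{{\sf xxyxxy},{\sf xxyxyx}\}$ with $t(P)=7$ (where the standard avoider, the Thue-Morse word, cannot be used because Lemma~\ref{TM}.3 forces it to contain squares of arbitrarily large period). In the latter case the central work is a combinatorial search for a non-Thue-Morse uniform morphism that simultaneously avoids cubes, the chosen binary pattern, and $S_7$; in the former, the morphism search is constrained so tightly by the $S_4$ condition that its existence is the substantive content of the theorem. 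In both cases the verification framework is the same synchronization-plus-finite-check argument already used in Theorem~\ref{other}, applied once a suitable morphism has been located.
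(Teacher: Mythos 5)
Your proposal matches the paper's proof: the paper establishes the lower bounds on $t(P)$ by exhibiting, for each pattern, an explicit uniform morphism from ternary square-free words to binary words avoiding $\{{\sf xxx},P,S_{t(P)}\}$ (verified by the same synchronization-plus-finite-check argument as Theorem~\ref{other}, which also yields exponential growth via the Observation), and it establishes the matching unavoidability of $S_{t(P)-1}$ by computer search. You correctly identify that the substantive content is locating the morphisms themselves, which the paper simply lists (with lengths ranging from 28 to 88).
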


\begin{proof}
Below we list the morphisms mapping ternary square-free words to the binary words avoiding the required sets. The proof of avoidability and exponential growth is the same as for Theorem~\ref{other}.\\[3pt]
$\arraycolsep=1pt\begin{array}{l}
P={\sf xxyyxx},\text{ length}=62\\
0\to 00100101101100101001101101001001101011001010011011001001101011\\
1\to 00100101101100101001101100100110101100101001101101001001101011\\
2\to 00100101101100101001101011001001101100101001101101001001101011\\[3pt]
P={\sf xxyxyy},\text{ length}=88\\
0\to 00100110101100101001100110101100110010100110101100100110110010100\\
\phantom{0\to}11001101011001010011011\\
1\to 00100110101100101001100110101100100110110010100110101100110010100\\
\phantom{0\to}11001101011001010011011\\
2\to 00100110101100101001100110101100100110110010100110011010110010100\\
\phantom{0\to}11010110011001010011011\\[3pt]
P={\sf xyxxyxy},\text{ length}=49\\
0\to 0011001011011001101001001100110101100101001101011\\
1\to 0011001011011001101001001100101101100101001101011\\
2\to 0011001011011001001101011001010011011001001101011\\[3pt]
P={\sf xxyyxyx},\text{ length}=32\\
0\to 00100110110100100110011011010011\\
1\to 00100101101001001101101001011011\\
2\to 00100101100110110100100110011011\\[3pt]
P={\sf xyxxyyxy},\text{ length}=28\\
0\to 0010010110100110011010110011\\
1\to 0010010110100110010110110011\\
2\to 0010010110011011010010110011\\[3pt]
P={\sf xxyxyx},\text{ length}=44\\
0\to 00100110011010011001011001101001011011001101\\
1\to 00100110010110110011001011001101001100101101\\
2\to 00100110010110011010010110110011001011001101\\[3pt]
P={\sf xxyxxy},\text{ length}=66\\
0\to 001010011001011001101001100101101001101011001101001100101001101011\\
1\to 001010011001011001101001100101001101011001101001100101101001101011\\
2\to 001010011001011001101001011001010011010110011010011001011001101011
\end{array}$

Unavoidability of shorter squares is verified by computer search.
\end{proof}

\section{Growth rates: numerical results} \label{upp}

A general method to obtain upper bounds for the growth rates of factorial languages was proposed in \cite{Sh10tc}. An open-source implementation of this method can be found in \cite{Calc}. We adjust this method for each pattern under consideration and calculate the upper bounds for the growth rates of avoiding binary cube-free language. Here is a high-level overview of the method. 

Let $L$ be a factorial language and $M$ be its set of minimal forbidden words. If $L$ is an infinite language avoiding a pattern, then $M$ is also infinite. We construct a family $\{M_i\}$ of finite subsets of $M$ such that
$$
M_1\subseteq M_2\subseteq\dotsb\subseteq M_i\subseteq\dotsb\subseteq M,
\quad M_1\cup M_2\cup\dotsb\cup M_i\cup\dotsb=M. 
$$
Let $L_i$ be the binary factorial language with the set of minimal forbidden words $M_i$. One has
$$
L\subseteq\dotsb\subseteq L_i\subseteq\dotsb\subseteq L_1,
\quad L_1\cap L_2\cap\dotsb\cap L_i\cap\dotsb=L.
$$
It is not hard to show that the sequence of growth rates $\{\IR(L_i)\}$ decreases and converges to $\IR(L)$. The languages $L_i$ are regular, and then the number $\IR(L_i)$ can be found with any degree of precision. Increasing $i$, one can make the upper bound arbitrarily close to $\IR(L)$.

Thus, to obtain an upper bound for $\IR(L)$ one should make three steps. First, build a set of minimal forbidden words $M_i$ for the chosen $i$. Second,
convert this set into a deterministic finite automaton recognizing $L_i$ (the automaton should be both accessible and coaccessible). And finally, calculate the number $\IR(L_i)$. If we calculate $M_i$ by some search procedure and store it in a trie, then the second step can be implemented as a modified Aho-Corasick algorithm for pattern matching that converts the trie into an  automaton having the desired properties. At the third step we calculate the growth rate of $L_i$ with any prescribed precision by an efficient (linear in the size of automaton) iterative algorithm. The second and third steps are common for all factorial languages.

For each pattern we use an ad-hoc procedure for constructing the set of minimal forbidden words for avoiding languages. In most cases we bound the length of the constructed forbidden words with some constant. We iterate over the candidate forbidden words in the order of increasing length and check that they do not contain proper forbidden factors, using already built shorter forbidden words for pruning. In practice, the described method allows us to construct and handle sets of thousands of forbidden words and automata of millions of vertices efficiently. Some numerical results are presented in Table~\ref{tab1}. For each of the processed languages, the sequence of obtained upper bounds converges very fast. So, the actual value of the growth rate in each case is likely to be quite close to the given upper bound.

\begin{table}[!htb]
\caption{Growth rates of binary cube-free languages avoiding binary patterns: upper bounds} \label{tab1}
\centerline{ 
\begin{tabular}{|l|l|l|l|}
\hline
Pattern&Upper bound&Pattern&Upper bound\\
\hline
{\sf xxyxxy}&1.098891&{\sf xyxyx}&1 (previously known)\\
{\sf xxyxyy}&1.226850&{\sf xxyyxyx}&1.310975\\
{\sf xyxyxx}&1.138449&{\sf xyxxyxy}&1.281612\\
{\sf xxyyxx}&1.322304&{\sf xyxxyyxy}&1.348932\\
\hline
\end{tabular} }
\end{table}


\section{A language of polynomial growth}

Statement 3 of Theorem~\ref{main}, proved in Sect.~\ref{sect:low}, tells us that $\sf xyxyx$ is the only binary pattern that is avoided by a subexponentially-growing infinite set of binary cube-free words. In this section, we present two binary patterns $P_1$ and $P_2$ such that the binary language avoiding $\{{\sf xxx},P_1,P_2\}$ has polynomial growth. This language contains the binary overlap-free language and is incomparable with the binary $(7/3)$-free language (the latter one is the biggest binary $\beta$-free language of polynomial growth \cite{KaSh04}). Thus, this is an essentially new example of a language of polynomial growth.

\begin{thrm} \label{polyn}
The binary cube-free language avoiding both the patterns $\sf xyxyxx$ and $\sf xxyxyx$ has polynomial growth.
\end{thrm}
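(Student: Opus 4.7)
The plan is to adapt the classical Restivo--Salemi argument for the polynomial growth of the binary overlap-free language, exploiting the fact that the two forbidden patterns $\sf xyxyxx$ and $\sf xxyxyx$ (mutual reversals of one another) forbid exactly the two ways of extending an overlap $\sf xyxyx$ by one more copy of its repeating unit. Every overlap-free word belongs to $L$, so $L$ contains at least polynomially many words of each length, and the task is to bound $C_L(n)$ from above.

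First I would aim to establish a Thue--Morse style decomposition: there is a constant $c$ such that every $w\in L$ with $|w|\geq 2c{+}1$ admits a factorization $w = s\,\theta(w')\,p$, where $\theta$ is the Thue--Morse morphism, $|s|,|p|\leq c$, and $w'$ again lies in $L$. Once this is in hand, iterating the decomposition yields a recurrence of the shape
\begin{equation*}
C_L(n) \;\leq\; (2c{+}1)^{2}\,C_L\bigl(\lceil (n-2c)/2\rceil\bigr),
\end{equation*}
which telescopes to a polynomial bound $C_L(n)=O(n^{k})$ with $k=2\log_{2}(2c{+}1)$.

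The heart of the argument is the decomposition itself, and for this I would study the \emph{square positions} of $w$, i.e.\ the indices $i$ with $w[i]=w[i{+}1]$. If all square positions share a common parity (after discarding a bounded prefix and suffix), then $w$ reads cleanly as a product of $\theta$-blocks, and setting $w'$ to be the corresponding $\theta$-preimage completes the factorization; the cube-freeness of $w$ and the avoidance of $\sf xyxyxx$, $\sf xxyxyx$ transfer verbatim to $w'$. To prove the parity claim I would argue by contradiction: assume two closest square positions $i<j$ of opposite parities exist; then $w[i...j{+}1]$ has odd length and equal letter pairs at both ends, forcing (via cube-freeness and the minimality of $j-i$) an overlap structure whose one-letter extensions inside $w$ produce exactly an image of $\sf xxyxyx$ on the left or of $\sf xyxyxx$ on the right.

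The main obstacle is that $L$ is strictly larger than the overlap-free language — for instance $01001001$ is an overlap that already belongs to $L$ — so the classical overlap-free parity argument does not apply verbatim, and one must control the overlaps that remain allowed. I expect to handle this by first proving a lemma bounding the length of any overlap that can appear in $L$: any sufficiently long overlap, by its own periodicity, would be forced to repeat the prefix (or suffix) of length $|{\sf x}|$ one more time, thereby producing an image of $\sf xxyxyx$ or $\sf xyxyxx$. Once the overlaps in $L$ are known to have bounded length, their effect on the distribution of square-position parities is confined to bounded windows, and the decomposition argument proceeds as in the overlap-free case. A modest computer check handling the base case up to the resulting constant length would be convenient, matching the style used earlier in the paper for the avoidability proofs.
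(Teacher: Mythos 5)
Your overall strategy is sound and its combinatorial core coincides with the paper's: the central fact in both arguments is that two squares of letters at positions of opposite parity force one of the factors $000$, $00100$, $01010$ (up to negation), and that the one-letter extensions of $00100$ and $01010$ inside a word of $L$ are driven into a cube or an image of $\sf xxyxyx$/$\sf xyxyxx$, so these parity-breaking factors can only live within a bounded distance of the ends of a word of $L$. Where you genuinely diverge is in how the quantitative conclusion is extracted. The paper applies the parity claim to two-sided infinite words to show that the extendable part $\e(L)$ is exactly the set of Thue--Morse factors, invokes the general theorem $\IR(L)=\IR(\e(L))$ to get subexponential growth, and then obtains polynomiality by a separate analysis (property $(\blacktriangle)$) showing that the non-$(7/3)$-free words of $L$ form a finite set plus a polynomially-sized family, so that the known polynomial bound for the $(7/3)$-free language applies. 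You instead apply the parity claim directly to finite words to get a Restivo--Salemi decomposition $w=s\,\theta(w')\,p$ with $w'\in L$ (note $w'\in L$ does follow, since an image of a pattern in $w'$ maps under $\theta$ to one in $w$), and then a divide-and-conquer recurrence. Your route is more self-contained and yields polynomiality in one step rather than two; the paper's route yields the sharper structural statement $\e(L)=\{$Thue--Morse factors$\}$ and an explicit degree bound inherited from the $(7/3)$-free language. (Your recurrence constant should count binary words $s,p$ of length at most $c$, so it is roughly $4^{c+1}$ rather than $(2c{+}1)^2$; this only changes the degree of the polynomial.)

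One sub-claim of yours is not right as justified and should be dropped or repaired: you assert that a sufficiently long overlap in $L$ "by its own periodicity" contains an image of $\sf xxyxyx$ or $\sf xyxyxx$. An overlap $uvuvu$ does not contain $uuvuvu$ or $uvuvuu$ as factors --- those arise only as \emph{extensions} of the overlap, which is precisely the content of the paper's property $(\blacktriangle)$ and does not bound the length of overlaps occurring in $L$. Fortunately this lemma is not needed for your argument: the decomposition requires only that the specific factors $00100$, $01010$ and their negations be confined to bounded windows at the two ends, and that is exactly what your forced-extension parity argument already establishes, without any statement about general overlaps.
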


\begin{proof}
Let $L$ be the language of all binary cube-free words avoiding both $\sf xyxyxx$ and $\sf xxyxyx$. Obviously, both $L$ and its extendable part $\e(L)$ contain the set of all Thue-Morse factors. We aim to prove that this set coincides with $\e(L)$. The definition of extendable word implies that any word from $\e(L)$ is a factor of a Z-word all finite factors of which also belong to $\e(L)$.

For any word from $L$, the factors
$$
000,010101,010100,11001001,10010011,010010010,
$$
and their negations are forbidden. Hence, a word $w\in\e(L)$ has no factor $01010$, because any its extension to the right contains $010101$ or $010100$. Similarly, $w$ has no factor $00100$: extending this word, we inevitably meet one of the words $000$, $11001001$, $10010011$, or $(010)^3$. The same argument applies for 10101 and 11011.\\[4pt] 
\emph{Claim}. If a Z-word $\mathbf z$ has no factors $000$, $01010$, $00100$, and their negations, then $\mathbf z$ is a product of $\theta$-blocks.\\[4pt]
If two squares of letters in a word begin in positions of different parity, then this word surely contains one of the listed factors. To see this, just consider the closest pair of such squares. So, all squares of letters in $\mathbf z$ occur in positions of the same parity. Hence, one can factorize $\mathbf z$ into the factors of length 2 in a way that splits any square of a letter, thus getting the desired product.

\smallskip
Consider a Z-word $\mathbf z$ all factors of which belong to $\e(L)$. By the claim, $\mathbf z$ is a product of 1-blocks. Consider its Thue-Morse preimage $\mathbf{z}'=\theta^{-1}(\mathbf z)$. The Z-word $\mathbf{z}'$ avoids the patterns $\sf xxx,xxyxyx$, and $\sf xyxyxx$. Indeed, if $\mathbf{z}'$ contains an image of a pattern under $f$, then $\mathbf{z}$ contains an image of the same pattern under $\theta\!f$. Hence, $\mathbf{z}'$ has no factors listed in the claim, and we conclude that it is a product of $\theta$-blocks. Then $\mathbf z$ is a product of $\theta^2$-blocks. Repeating this argument inductively, we obtain that $\mathbf z$ is a product of $\theta^n$-blocks for any $n$. Therefore, any finite factor of  $\mathbf z$ is a factor of some $\theta^n$-block, i.e., a Thue-Morse factor, as desired.

\smallskip
The set of Thue-Morse factors contains $\Theta(n)$ words of length $n$, and then has the growth rate 1. But the languages $L$ and $\e(L)$ always have the same growth rate (see \cite[Theorem~3.1]{Sh08ita}), so our language $L$ grows subexponentially. To prove that this growth is polynomial, some additional work is needed. 

Let us take an overlap $w=0v0v0\in L$ with $|v|>2$ and analyze how it can be extended within $L$. The words $0w,w0$ are images of $\sf xxyxyx$ and $\sf xyxyxx$, respectively, so, $0w,w0\notin L$. Note that $v$ begins or ends with 1, because $w$ has no factor 000. Assuming w.l.o.g. that $v=1v'$ and extending $w$ to the right by one symbol, we get a longer overlap: $w1=01v'01v'01$. We see that $w11$ and $w101$ are images of $\sf xyxyxx$. Assume that $w100=01v'01v'0100\in L$. 

If the last letter of $v'$ is 1, then $v$ ends with 11, because $010100\notin L$. Then $v'$ cannot begin with 1, because the factor $11011$ in the middle of $w$ means that $w$ contains a forbidden factor (compare to the beginning of the proof). But if $v'=0v''$, we see that the word $w100=01\,0v''010v''0100$ meets the pattern $\sf xyxyxx$ ($x\to 0$, $y\to v''01$). So, $v$ ends with 0 and then with 10. Then the word $w1001$ ends with 1001001, guaranteeing that  $w10011,w10010\notin L$. Thus, we have proved the following property.

\begin{itemize}
\item[($\blacktriangle$)] Suppose that $w=uvuvu\in L$, $|uv|\ge4$, and $|u|>1$. Then $w$ can be extended within $L$ by at most three letters to each side.
\end{itemize}

Finally, we estimate the number of words in $L$ that are not $(7/3)$-free. These words contain overlaps with $|u|\ge|v|/2$. From ($\blacktriangle$) it follows that the set of words in $L$ containing overlaps such that $|u|>1$ and $|u|\ge|v|/2$, is finite. So, it remains to consider the case $|u|=1$ (and then $|v|\le2$). If $|uv|=2$, the overlap is 01010 or 10101. It cannot be extended within $L$. Now let $|uv|=3$. Such an overlap must contain the factor 00100 or 11011, which cannot be extended within $L$ to both sides simultaneously by more than one letter. Then the words from $L$ containing an overlap of period 3 have the form~$0010010z$ or $10010010z$ up to reversal and negation. The number of such words grows polynomially because the word $z$ is overlap-free. Since the number of $(7/3)$-free words is also polynomial, we get a polynomial upper bound on the number of words in $L$.
\end{proof}

\begin{rmrk}
Concerning the bounds for the degree of the polynomial growth of the language $L$ considered in Theorem~\ref{polyn}, we have shown, in fact, that one can take the upper bound derived for the $(7/3)$-free language in \cite{BCJ09}. The obvious lower bound stems from the fact that $L$ contains all overlap-free words. 
\end{rmrk}

\bibliographystyle{plain}
\bibliography{my_bib}

\end{document}